\theoremstyle{plain}
\newtheorem{theorem}{Theorem}[section]
\newtheorem{corollary}[theorem]{Corollary}
\newtheorem{lemma}[theorem]{Lemma}
\newtheorem{proposition}[theorem]{Proposition}
\theoremstyle{definition}
\theoremstyle{remark}
\declaretheorem[name=Example,style=definition,qed=$\triangle$,sibling=theorem,Refname={Example,Examples}]{example}
\newcommand{\Z}{{\mathbb{Z}}}
\newcommand{\F}{{\mathbb{F}}}
\newcommand{\vv}{{\bf  v}}
\newcommand{\vw}{{\bf  w}}
\newcommand{\vx}{{\bf  x}}
\newcommand{\vh}{{\bf  h}}
\newcommand{\vg}{{\bf  g}}
\newcommand{\bzero}{{\bar 0}}
\newcommand{\bone}{{\bar 1}}
\newcommand{\ow}{\overline{\omega}}
\newcommand{\K}{{\cal K}}
\newcommand{\zero}{{\mathbf{0}}}
\newcommand{\one}{{\mathbf{1}}}
\newcommand{\gh}{\operatorname{GH}}
\newcommand{\rank}{\operatorname{rank}}
\newcommand{\wt}{{\rm wt}}
\newcommand{\bfomega}{\boldsymbol{\omega}}
\newcommand{\GF}{\operatorname{GF}}
\definecolor{gold}{rgb}{0.85,.26,0}
\title{Rank and Kernel of  $\F_p$-Additive Generalised Hadamard Codes\thanks{This work was partially supported by the
Spanish MINECO under Grant TIN2016-77918-P, and
by the Catalan AGAUR under Grant 2017SGR-00463.\newline \hspace*{0.3cm} $^1$Steven T. Dougherty is with the Department of Mathematics, University of Scranton, Scranton PA 18510, USA. \newline \hspace*{0.3cm} $^2$Josep Rif\`{a} and Merc\`{e} Villanueva
are with the Department of Information and Communications
Engineering, Universitat Aut\`{o}noma de Barcelona, 08193 Cerdanyola del Vall\`{e}s, Spain.}}
\author{Steven T. Dougherty$^1$, Josep Rif\`{a}$^2$ and Merc\`{e} Villanueva$^2$}
\date{\today}
\begin{document}

\maketitle

\begin{abstract}
A subset of a vector space $\F_q^n$ is $K$-additive if it is a linear space over the subfield $K\subseteq \F_q$. Let $q=p^e$, $p$ prime, and $e>1$. Bounds on the rank and dimension of the kernel of generalised Hadamard ($\gh$) codes which are $\F_p$-additive are established. For specific ranks and dimensions of the kernel within these bounds, $\F_p$-additive $\gh$ codes are constructed. Moreover, for the case $e=2$, it is shown that the given bounds are tight and it is possible to construct an  $\F_p$-additive $\gh$ code for all allowable ranks and dimensions of the kernel between these bounds. Finally, we also prove that these codes are self-orthogonal with respect to the trace Hermitian inner product, and generate pure quantum codes. 
\end{abstract}
%\begin{IEEEkeywords}
{\bf Keywords:} Generalised Hadamard matrix, generalised Hadamard code, rank, kernel, nonlinear code, additive code
%\end{IEEEkeywords}

%%%%%%%%%%%%%%%%%%%%%%%%%%%%
\section{Introduction}
%%%%%%%%%%%%%%%%%%%%%%%%%%%%

Let $\F_q=\GF(q)$ denote the finite field with $q$ elements, where $q=p^e$, $p$ prime.
Let $\F^n_q$ be the vector space of dimension $n$ over $\F_q$.
The {\it Hamming distance} between vectors $\vw$, $\vv \in \F^n_q$,
denoted by $d(\vw,\vv)$, is the number of coordinates in which $\vw$ and
$\vv$ differ.
A {\it code} $C$ over $\F_q$ of length $n$ is a nonempty subset of $\F^n_q$. The
elements of $C$ are called {\it codewords}. The {\it minimum distance} of a code is the
smallest Hamming distance between any pair of distinct codewords. A code $C$ over $\F_q$ is called {\it
linear} if it is a linear space over $\F_q$ and, it is called {\it $K$\nobreakdash-additive}
if it is linear over a subfield $K \subseteq \F_q$.
The dimension of a $K$\nobreakdash-additive code $C$ over $\F_q$ is defined as the number $k$ such that
$q^k=|C|$. Note that $k$ is not necessarily an integer, but $ke$ is an integer, where $q=|K|^e$.

Two codes $C_1$, $C_2 \subset \F^n_q$ are said to be {\it permutation equivalent}
if there exists a permutation $\sigma$ of the $n$ coordinates such that
$C_2=\{\sigma(c_1,c_2,\ldots,c_n)=(c_{\sigma^{-1}(1)},\ldots, c_{\sigma^{-1}(n)}) : (c_1,c_2,\ldots,c_n) \in C_1 \}$. Without loss of generality, we
shall assume, unless stated otherwise, that the all-zero vector, denoted by $\zero$, is
in $C$.

Two structural parameters of (nonlinear) codes are the dimension of
the linear span and the kernel. The {\it linear span} of a code $C$ over $\F_q$, denoted by ${\cal R}(C)$, is the
subspace over $\F_q$ spanned by $C$, that is ${\cal R}(C)=\langle C \rangle$.
The dimension of ${\cal R}(C)$ is called the {\it rank} of $C$ and is denoted by $\rank(C)$.
If $q=p^e$, $p$ prime, we can also define ${\cal R}_p(C)$ and $\rank_p(C)$ as the 
$\F_p$-additive code spanned by $C$ and its dimension, respectively.
The {\it kernel} of a code $C$ over $\F_q$, denoted
by ${\cal K}(C)$, is defined as ${\cal K}(C) =\{\vx\in \F_q^n \ : \ \alpha \vx+C=C$ for all $\alpha \in \F_q  \}.$
If $q=p^e$, $p$ prime, we can also define the {\it $p$-kernel} of $C$ as
$\K_p(C)=\{ \vx\in \F_q^n \ : \ \vx+C=C \}.$  Since we assume that $\zero \in C$, then
$\K(C)$ is a linear subcode of $C$ and $\K_p(C)$ is an \mbox{$\F_p$\nobreakdash-additive subcode}.
We denote the dimension of the kernel (resp., $p$-kernel) of $C$ by $\ker(C)$ (resp., $\ker_p(C)$).
These concepts were first defined
in \cite{merce1} for codes over $\F_q$, generalising the binary case described previously in \cite{BGH}, \cite{PhLV}.
In \cite{merce1}, it was proved that any code $C$ over $\F_q$ can be written as the union of
cosets of $\K(C)$ (resp., $\K_p(C)$), and $\K(C)$ (resp., $\K_p(C)$) is the largest such linear code
over $\F_q$ (resp., $\F_p$) for which this is true. Moreover, it is clear that $\K(C) \subseteq \K_p(C)$.

\medskip
A {\it generalised Hadamard} ($\gh$) {\it matrix} $H(q,\lambda)=(h_{ij})$ of order $n=q\lambda$
over $\F_q$ is a $q\lambda \times q\lambda$ matrix with entries from
$\F_q$ with the property that for every $i,j$, $1\leq i<j\leq q\lambda$, each of the multisets
$\{ h_{is} - h_{js} :  1 \leq s \leq q\lambda \}$ contains every element of $\F_q$ exactly $\lambda$ times.
It is known that since $(\F_q,+)$ is an abelian group then $H(q,\lambda)^T$
is also a $\gh$ matrix, where $H(q,\lambda)^T$ denotes the transpose of $H(q,\lambda)$ \cite{jun}.
An ordinary Hadamard matrix of order $4\mu$ corresponds to a $\gh$ matrix $H(2,\lambda)$ over $\F_2$, where $\lambda=2\mu$.

Two $\gh$ matrices $H_1$ and $H_2$ of order $n$ are said to be {\it equivalent}
if one can be obtained from the other by a permutation of the rows and columns
and adding the same element of $\F_q$ to all the coordinates in a row or in a column.
We can always change the first row and column of a $\gh$ matrix
into zeros and we obtain an equivalent $\gh$
matrix which is called {\it normalized}. From a normalized $\gh$ matrix $H$,
we denote by $F_H$ the code over $\F_q$ consisting of the rows of $H$,
and $C_H$ the one defined as $C_H=\bigcup_{\alpha \in \F_q} (F_H+\alpha {\bf 1})$, where $F_H+\alpha {\bf 1}=\{ \vh+ \alpha {\bf 1} : \vh \in F_H\}$ and ${\bf 1}$ denotes the all-one vector.
The code $C_H$ over $\F_q$ is called {\it generalised Hadamard} ($\gh$) {\it code}. Note that $F_H$ and $C_H$ are generally nonlinear codes over $\F_q$.

To check whether two normalized $\gh$ matrices are equivalent
is known to be an NP-hard problem \cite{NPHard}. However, we can use the invariants related to the linear span and kernel
of the corresponding $\gh$ codes in order to help in their classification,
since if two $\gh$ codes have different ranks or dimensions of the kernel, the normalized $\gh$ matrices are nonequivalent. Given a normalized $\gh$ matrix $H$, to establish the rank and 
dimension of the kernel of the corresponding code $F_H$ is the same as to establish these values for the code $C_H$, since  
\begin{equation}\label{lemm:1}
\begin{aligned}
   \rank(C_H)= &\rank(F_H) +1 \quad \mbox{ and }\\
   \ker(C_H) = &\ker(F_H) +1
\end{aligned}
\end{equation} by \cite[Lemma 1]{DRV15}. In this paper, we focus on the codes $C_H$, although
everything could be rewritten in terms of the  codes $F_H$. 
It is important to emphasise that this is true as long as the $\gh$ matrix $H$ is normalized.

The rank and dimension of the kernel for ordinary Hadamard codes over $\F_2$ have already been studied.
Specifically, lower and upper bounds for these
two parameters were established, and the construction of an Hadamard code for all allowable ranks and
dimensions of the kernel between these bounds was given \cite{HadPower2,HadAnyPower}. The values of the rank and dimension of the kernel for $\Z_2\Z_4$-linear Hadamard codes were given  in \cite{HadAdditius}, and these invariants for $\Z_{2^s}$-linear Hadamard codes  have been studied in \cite{Carlos1,Carlos2}. The $\Z_2\Z_4$-linear Hadamard codes (resp. $\Z_{2^s}$-linear Hadamard codes) are the Hadamard codes over $\F_2$ obtained as the Gray map image of $\Z_2\Z_4$\nobreakdash-additive codes (resp. $\Z_{2^s}$-additive codes), which are subgroups of $\Z_2^\alpha \times \Z_4^\beta$ (resp. $\Z_{2^s}^\beta$). 

Some of the results on the rank and dimension of the kernel for Hadamard codes over $\F_2$ have been generalised to $\gh$ codes over $\F_q$ with $q\not =2$ \cite{DRV15}. Specifically, some lower and upper bounds for the dimension of the kernel, and for the rank once the dimension of the kernel is fixed, were given. Moreover constructions of $\gh$ codes having different values for these invariants within these bounds, were presented.
In this paper, we continue studying the rank and dimension of the kernel for $\gh$ codes over $\F_q$.
However, now we focus on a specific family of $\gh$ codes, namely the  {\it $\F_p$-additive $\gh$  codes}, that is,
$\F_p$-additive codes over $\F_q$ obtained from $\gh$ matrices $H(q,\lambda)$.

The paper is organized as follows. In \Cref{sec:bounds}, lower and upper bounds on the dimension of the kernel, and the rank once the dimension of the kernel is fixed, are given. In \Cref{sec:Kronecker}, several constructions of $\F_p$-additive $\gh$ 
codes over $\F_q$ with $q=p^e$, $p$ prime and $e>1$, are shown. In \Cref{sec:CombConst}, by using these constructions, we establish for which allowable pairs $(r,k)$, where $r$ is the rank and $k$ the dimension of the kernel, there exists an  $\F_p$-additive $\gh$ code having these invariants. Finally, in \Cref{sec:quantum}, we see that the $\F_p$-additive $\gh$ codes over $\F_{p^2}$ can be used to generate pure quantum codes since they are self-orthogonal with respect to the trace Hermitian inner product.

\section{Bounds on the rank and dimension of the kernel}
\label{sec:bounds}

In this section, we state new results on the rank and dimension of the kernel for
 $\F_p$-additive generalised Hadamard  codes. Note that a $\gh$ matrix $H(p,\lambda)$ over $\F_p$, $p$ prime,
generates an  $\F_p$-additive $\gh$  code $C_H$ of length $n=\lambda p = p^t$ if and only if
$\rank(C_H)=\rank_p(C_H)=\ker_p(C_H)=\ker(C_H)=1+t$. Therefore, we focus on  $\F_p$-additive $\gh$ codes over $\F_q$ with
$q=p^e$, $e>1$.

\begin{proposition}  \cite[Proposition 9]{DRV15}\label{bounds}
Let $H(q,\lambda)$ be a $\gh$ matrix over $\F_q$, where $q=p^e$, $p$ prime, and $e\geq 1$. Let $n=q\lambda=p^t s$ such that $\gcd(p,s)=1$.
Then $1 \leq \ker(C_H) \leq \ker_p(C_H) \leq 1+t/e$.
\end{proposition}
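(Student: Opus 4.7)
The plan is to split the chain $1\leq\ker(C_H)\leq\ker_p(C_H)\leq 1+t/e$ into its three constituent inequalities and handle them independently; the upper bound is the substantive one and rests on a divisibility argument.

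For the lower bound, I would exploit that $H$ is normalized, so $C_H=\bigcup_{\alpha\in\F_q}(F_H+\alpha\mathbf{1})$. Adding any $\alpha\mathbf{1}$ to $C_H$ merely permutes the $q$ cosets in this union, so $\mathbf{1}\in\K(C_H)$. Because $\K(C_H)$ is an $\F_q$-subspace containing the nonzero vector $\mathbf{1}$, it contains the line $\F_q\cdot\mathbf{1}$, giving $|\K(C_H)|\geq q$ and hence $\ker(C_H)\geq 1$. For the middle inequality, the set inclusion $\K(C_H)\subseteq\K_p(C_H)$ already highlighted in the excerpt immediately yields $q^{\ker(C_H)}=|\K(C_H)|\leq|\K_p(C_H)|=q^{\ker_p(C_H)}$, so $\ker(C_H)\leq\ker_p(C_H)$.

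The heart of the argument is the upper bound. Since $C_H$ is a disjoint union of cosets of $\K_p(C_H)$, the cardinality $|\K_p(C_H)|=p^{e\ker_p(C_H)}$ divides $|C_H|$. I would then pin down $|C_H|$ exactly: the $q$ cosets $F_H+\alpha\mathbf{1}$, $\alpha\in\F_q$, are pairwise disjoint, because an equality $\vh_1=\vh_2+\alpha\mathbf{1}$ with distinct rows $\vh_1,\vh_2$ of $H$ and $\alpha\in\F_q\setminus\{0\}$ would force the componentwise differences of those two rows of $H$ to be the constant multiset $\{\alpha\}^{q\lambda}$, contradicting the $\gh$ property that each element of $\F_q$ appears exactly $\lambda$ times among these differences. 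Hence $|C_H|=q\cdot q\lambda=p^{e+t}s$. Since $\gcd(p,s)=1$ and $|\K_p(C_H)|$ is a power of $p$, the divisibility $p^{e\ker_p(C_H)}\mid p^{e+t}s$ forces $e\ker_p(C_H)\leq e+t$, which rearranges to $\ker_p(C_H)\leq 1+t/e$.

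The only step with any real content is the cardinality computation $|C_H|=qn$, since everything else is either explicitly stated in the excerpt or an immediate consequence of dimension bookkeeping. Establishing the disjointness of the $q$ cosets via the $\gh$-multiset condition is what pins down the exact $p$-adic valuation of $|C_H|$ and yields the sharp factor $1+t/e$ in the bound.
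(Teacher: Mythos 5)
Your argument is correct: the paper itself imports this statement from \cite[Proposition 9]{DRV15} without proof, and your three-part argument (the all-one vector lies in $\K(C_H)$, the inclusion $\K(C_H)\subseteq\K_p(C_H)$, and the divisibility $|\K_p(C_H)|=p^{e\ker_p(C_H)}\mid |C_H|=qn=p^{e+t}s$) is exactly the counting the paper relies on elsewhere, e.g.\ in the proof of its Proposition~\ref{bounds-a}. Your extra care in verifying $|C_H|=qn$ via disjointness of the cosets $F_H+\alpha\mathbf{1}$ is a sound and welcome detail that the source leaves implicit.
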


\begin{lemma} \cite[Lemma 16]{DRV15} \label{lemm:maxkernel}
Let $C_H$ be a $\gh$ code of length $n=q^h s$ over $\F_q$, where $s\not =1$ and $s$ is not a multiple of $q$. Then $\ker(C_H) \leq h$.
\end{lemma}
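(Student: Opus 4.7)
The plan is to assume for contradiction that $\ker(C_H) = h+1$ and to exhibit a non-negative integer that the hypotheses force to equal $s(q-1)/q$, a non-integer whenever $q \nmid s$.

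I would let $\K = \K(C_H)$, an $\F_q$-linear code of dimension $h+1$ containing $\mathbf{1}$, and set $\K' = \{\vc \in \K : \vc_1 = 0\}$. Since the evaluation at coordinate $1$ is surjective (sending $\mathbf{1}$ to $1$), $\dim_{\F_q}\K' = h$ and $|\K'| = q^h$. Writing any $\vc \in \K \subseteq C_H$ as $\vc = \vh + \alpha\mathbf{1}$ with $\vh \in F_H$, the normalisation $\vh_1 = 0$ gives $\vc_1 = \alpha$, so $\vc_1 = 0$ forces $\alpha = 0$ and $\vc = \vh \in F_H$. Thus $\K'$ consists of $q^h$ rows of the normalised $\gh$ matrix $H$, and each nonzero element of $\K'$ is a nonzero row, hence balanced of weight $n - \lambda$.

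Next I would pin down $|S|$, where $S = \{p \in [n] : \vc_p = 0 \text{ for all } \vc \in \K'\}$ is the set of coordinates on which $\K'$ vanishes identically, by double counting pairs $(p,\vc) \in [n] \times \K'$ with $\vc_p = 0$. Counted by $p$ the total is $|S|\, q^h + (n - |S|)\, q^{h-1}$, since for $p \notin S$ the map $\vc \mapsto \vc_p$ is a surjection $\K' \to \F_q$ with fibres of size $q^{h-1}$; counted by $\vc$ it is $n + (q^h - 1)\lambda$, since $\mathbf{0}$ contributes $n$ and each of the remaining $q^h - 1$ balanced codewords has exactly $\lambda$ zero coordinates. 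Equating and using $\lambda = q^{h-1} s$ forces $|S| = s$.

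Finally I would pick any row $\vh \in F_H \setminus \K'$; such a row exists because $|F_H| = n = q^h s > q^h = |\K'|$. For every $\vc \in \K'$ the $\gh$ property gives $\wt(\vh - \vc) = n - \lambda$, with the case $\vc = \mathbf{0}$ being just the weight of $\vh$ itself. Summing over $\vc \in \K'$ and counting by coordinate, the positions in $S$ contribute $q^h\, a$ with $a = |\{p \in S : \vh_p \neq 0\}|$, while each of the $n - s$ positions outside $S$ contributes $q^{h-1}(q-1)$ by surjectivity. The identity $q^h a + (n-s)\,q^{h-1}(q-1) = q^h(n-\lambda)$, together with $n = q\lambda$, simplifies to $q\, a = (q-1)\, s$, i.e.\ $a = s(q-1)/q$. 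Since $\gcd(q,q-1) = 1$ and $q \nmid s$, this value is not an integer, contradicting $a \in \Z_{\geq 0}$, and we conclude $\ker(C_H) \leq h$. The main obstacle is establishing the structural identity $|S| = s$; once it is in hand, the external-row count is a one-line calculation.
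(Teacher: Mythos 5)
Your argument is correct, and there is nothing in the paper to compare it against: the paper imports this statement from \cite[Lemma 16]{DRV15} without reproducing a proof, so your double-counting derivation stands on its own. I checked the two counts: the identity $|S|\,q^h+(n-|S|)q^{h-1}=n+(q^h-1)\lambda$ together with $n=q\lambda$ and $\lambda=q^{h-1}s$ does give $|S|=s$, and the second count against an external row gives $qa=(q-1)s$, which is impossible for $a\in\Z_{\ge 0}$ when $q\nmid s$; the hypothesis $s\neq 1$ is used exactly where you need $|F_H|=q^hs>q^h=|\K'|$ to find such a row. Two remarks. First, a small logical slip in the framing: ruling out $\ker(C_H)=h+1$ does not by itself rule out $\ker(C_H)\ge h+2$. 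The fix is immediate — assume $\ker(C_H)\ge h+1$ and let $\K$ be \emph{any} $(h+1)$-dimensional $\F_q$-subspace of $\K(C_H)$ containing $\mathbf{1}$ (which exists since $\mathbf{1}\in\K(C_H)$); your computation uses nothing else about $\K$. Second, and this is a feature rather than a bug: your proof never invokes the defining translation property of the kernel, only that $\K$ is an $\F_q$-linear subspace of $C_H$ through $\mathbf{1}$ and the row-difference property of the $\gh$ matrix. So you have in fact shown the stronger statement that, under the stated hypotheses on $n$, the code $C_H$ contains no $(h+1)$-dimensional $\F_q$-linear subspace containing $\mathbf{1}$ (equivalently, no $q^h$ rows of the normalized matrix form a linear subspace), of which the bound on $\ker(C_H)$ is an immediate corollary.
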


\begin{lemma}\label{lemm:3.1}
Let $H(q,\lambda)$ be a $\gh$ matrix over $\F_q$ such that $C_H$ is $\F_p$\nobreakdash-additive.
Let $n=q\lambda=p^t s$ such that $\gcd(p,s)=1$, where $q=p^e$, $p$ prime, and $e\geq 1$.
For any $v\in C_H$, $v\in \K(C_H)$ if and only if $\mu v \in C_H$ for all $\mu \in \F_q$.
\end{lemma}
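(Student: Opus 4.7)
The plan is to verify both directions of the equivalence directly, using only the fact that $\F_p$-additivity of $C_H$ implies closure under coordinate-wise addition.

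For the forward direction, suppose $v \in \K(C_H)$. By definition of the kernel, $\alpha v + C_H = C_H$ for every $\alpha \in \F_q$. Since $\zero \in C_H$ (we assume $\gh$ codes contain the all-zero vector), taking the image of $\zero$ under the translation gives $\mu v = \mu v + \zero \in C_H$ for every $\mu \in \F_q$. This direction is immediate and does not even use the hypothesis that $v \in C_H$.

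For the converse, assume $v \in C_H$ and $\mu v \in C_H$ for every $\mu \in \F_q$. I want to show $\alpha v + C_H = C_H$ for every $\alpha \in \F_q$. Fix $\alpha \in \F_q$ and pick any $c \in C_H$. Since $C_H$ is $\F_p$-additive, it is in particular closed under vector addition over $\F_p$, hence under addition in $\F_q^n$. The element $\alpha v$ lies in $C_H$ by hypothesis, and $c \in C_H$, so $\alpha v + c \in C_H$. Thus $\alpha v + C_H \subseteq C_H$, and since translation by $\alpha v$ is a bijection on $\F_q^n$, the two sets have the same (finite) cardinality, giving $\alpha v + C_H = C_H$. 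This holds for all $\alpha \in \F_q$, so $v \in \K(C_H)$.

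There is essentially no obstacle here: the hypotheses on $n=p^t s$ with $\gcd(p,s)=1$ and on the $\gh$ structure are not needed for the argument, which only relies on $\F_p$-additivity (additive closure) together with the fact that $\zero \in C_H$. The statement is best thought of as a convenient reformulation of membership in $\K(C_H)$ for additive codes: the usual definition quantifies over all $\alpha \in \F_q$ and all $c \in C_H$, but when $C_H$ is already closed under addition, the condition collapses to requiring only that the one-dimensional $\F_q$-line $\F_q \cdot v$ be contained in $C_H$. I will record the proof in exactly these two short paragraphs.
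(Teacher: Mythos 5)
Your proof is correct and follows essentially the same route as the paper: the forward direction is immediate from $\zero\in C_H$, and the converse uses $\F_p$-additivity to show $\mu v + w \in C_H$ for all $w\in C_H$, whence $\mu v + C_H = C_H$ by finiteness. The paper's version is just a terser rendering of the same argument, and your observation that the $\gh$ and $n=p^ts$ hypotheses are not needed is accurate.
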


\begin{proof}
Assume that $\mu v \in C_H$ for all $\mu \in \F_q$. Since $C_H$ is $\F_p$\nobreakdash-additive, for any $w\in C_H$ we have that $\mu v+w \in C_H$. Hence, the statement follows.
\end{proof}
\begin{proposition}\label{bounds-a}
Let $H(q,\lambda)$ be a $\gh$ matrix over $\F_q$, where $q=p^e$, $p$ prime, and $e>1$.
Let $n=q\lambda=p^t s$ such that $\gcd(p,s)=1$.
Then
\begin{enumerate}[(i)]
\item If $C_H$ is an $\F_p$\nobreakdash-additive code, then $s=1$.
\item The code $C_H$ is an $\F_p$\nobreakdash-additive code if and only if $$\rank_p(C_H)=\ker_p(C_H)=1+t/e.$$
\item If $C_H$ is an $\F_p$\nobreakdash-additive code  and $\ker(C_H)=k$, then $$ \frac{e+t-k}{e-1} \leq \rank(C_H)\leq 1+t-(e-1)(k-1).$$
\item If $C_H$ is an $\F_p$\nobreakdash-additive code, then $\rank(C_H)=\ker(C_H)= 1+t/e$ when $C_H$ is linear over $\F_q$ ($t$ is a multiple of $e$), or otherwise $$1\leq  \ker(C_H)\leq \lfloor t/e \rfloor.$$ %otherwise.
\end{enumerate}
\end{proposition}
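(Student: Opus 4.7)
The plan is to handle the four claims in order, with the main tools being Proposition~\ref{bounds}, Lemma~\ref{lemm:3.1}, Lemma~\ref{lemm:maxkernel}, and careful accounting of $\F_p$-dimensions versus $\F_q$-dimensions along the chain $\K(C_H)\subseteq C_H\subseteq {\cal R}(C_H)$. For (i) I would start from the size formula $|C_H|=qn=p^{e+t}s$ (the $q$ translates $F_H+\alpha\mathbf{1}$ are pairwise disjoint because the only constant row of a normalized $\gh$ matrix is $\zero$); $\F_p$-additivity forces $|C_H|$ to be a power of $p$, and $\gcd(p,s)=1$ gives $s=1$. For (ii), the forward direction uses $C_H={\cal R}_p(C_H)=\K_p(C_H)$, so both parameters equal $\log_q|C_H|=1+t/e$; the converse follows because the sandwich $\K_p(C_H)\subseteq C_H\subseteq {\cal R}_p(C_H)$ whose extreme terms both have size $p^{e+t}$ forces $s=1$ and $C_H=\K_p(C_H)$.

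For (iii), write $V={\cal R}(C_H)$, $K=\K(C_H)$, $r=\rank(C_H)$, $k=\ker(C_H)$, so that $\dim_{\F_p}C_H=e+t$ by (ii). The upper bound $r\leq 1+t-(e-1)(k-1)$ comes from noticing that $C_H/K$ is an $\F_p$-subspace of the $\F_q$-space $V/K$ that $\F_q$-spans it; hence an $\F_p$-basis of $C_H/K$ is also an $\F_q$-spanning set of $V/K$, so $\dim_{\F_p}(C_H/K)\geq \dim_{\F_q}(V/K)$, that is, $e+t-ek\geq r-k$. For the lower bound I would fix an $\F_p$-basis $\{1=\omega_0,\omega_1,\dots,\omega_{e-1}\}$ of $\F_q$ and build the $\F_p$-linear map $\Phi\colon C_H\to (V/C_H)^{e-1}$, $v\mapsto(\omega_1 v+C_H,\dots,\omega_{e-1}v+C_H)$. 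By Lemma~\ref{lemm:3.1}, $\Phi(v)=0$ is equivalent to $\mu v\in C_H$ for every $\mu\in\F_q$, hence to $v\in K$; so $C_H/K$ embeds into $(V/C_H)^{e-1}$ over $\F_p$, and comparing $\F_p$-dimensions yields $e+t-ek\leq(e-1)(er-e-t)$, which rearranges to $r\geq(e+t-k)/(e-1)$.

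For (iv), I would split on whether $C_H$ is $\F_q$-linear. If it is, then $C_H=K=V$, so $r=k$; and $|C_H|=p^{e+t}$ being a power of $q$ forces $e\mid t$, giving $r=k=1+t/e$. If $C_H$ is not $\F_q$-linear, then $k\geq 1$ by Proposition~\ref{bounds}, and the upper bound splits in turn: when $e\mid t$, the strict inclusion $K\subsetneq C_H$ between two $\F_q$-linear codes forces $k<1+t/e$, hence $k\leq t/e=\lfloor t/e\rfloor$; when $e\nmid t$, I would write $n=p^t=q^{\lfloor t/e\rfloor}\cdot p^{t\bmod e}$ with $s'=p^{t\bmod e}$ satisfying $s'\neq 1$ and $q\nmid s'$, so Lemma~\ref{lemm:maxkernel} applies directly and yields $k\leq\lfloor t/e\rfloor$. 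The step I anticipate as the main obstacle is the lower bound in (iii): packaging the $e-1$ conditions $\omega_j v\in C_H$ into a single $\F_p$-linear map $\Phi$ with $\ker\Phi=K$ is what turns Lemma~\ref{lemm:3.1} into a quantitative dimension inequality, and without this observation one only recovers the weaker bound $r\geq 1+t/e$ coming from the trivial inclusion $C_H\subseteq V$.
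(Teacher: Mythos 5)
Your proof is correct, and for items (i), (ii), (iv) and the upper bound in (iii) it follows the paper's argument essentially verbatim: the size count $|C_H|=qn=p^{e+t}s$, the observation that an $\F_p$-basis of $C_H/\K(C_H)$ spans ${\cal R}(C_H)/\K(C_H)$ over $\F_q$, the case split on $\F_q$-linearity, and the appeal to \Cref{lemm:maxkernel} when $e\nmid t$ all match (you even add the converse of (ii), which the paper leaves implicit). The one place where you genuinely diverge is the lower bound in (iii). The paper argues that, by \Cref{lemm:3.1}, every $\F_q$-line through a point of $C_H\setminus\K(C_H)$ meets $C_H$ in $\F_p$-dimension at most $e-1$, and from this reads off $\dim_{\F_p}(C_H/\K(C_H))\leq (e-1)(\rank(C_H)-k)$ — a step that is left rather informal. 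Your map $\Phi\colon C_H\to({\cal R}(C_H)/C_H)^{e-1}$ with $\ker\Phi=\K(C_H)$ packages the same use of \Cref{lemm:3.1} into a single injective $\F_p$-linear map, giving the formally different intermediate inequality $e+t-ek\leq(e-1)(er-e-t)$; both inequalities rearrange to the same bound $\rank(C_H)\geq(e+t-k)/(e-1)$, and your version has the advantage of being fully rigorous with no further counting argument needed. One small slip: in (iv) you call $\K(C_H)\subsetneq C_H$ a strict inclusion ``between two $\F_q$-linear codes,'' but in that branch $C_H$ is precisely \emph{not} $\F_q$-linear; the conclusion $k<1+t/e$ still follows from the strictness of the inclusion and the cardinalities, so nothing breaks.
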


\begin{proof}
Since the number of codewords is $|C_H|=qn=p^{e+t} s$, if $C_H$ is $\F_p$\nobreakdash-additive, then we have that $s=1$ and
$\rank_p(C_H)=\ker_p(C_H)=1+t/e$. This proves items $(i)$ and $(ii)$.

Let  $C_H$ be an $\F_p$\nobreakdash-additive code with $\ker(C_H)=k$. The kernel $\K(C_H)$ is the largest linear subspace over $\F_q$ in $C_H$ such that $C_H$ can be partitioned into cosets of $\K(C_H)$. Specifically, there are  $|C_H|/q^k = q^{1+t/e}/q^{k}=q^{1+t/e-k}=p^{e+t-ek}$ cosets. Since $C_H$ and $\K(C_H)$ are linear over $\F_p$, the above cosets (that is, the elements of the  quotient $C_H / \K(C_H)$) have a linear structure over $\F_p$.
Therefore, there are $e+t-ek$ independent vectors over $\F_p$ generating these cosets, which means that the number of independent vectors over $\F_q$ generating these cosets  is upper bounded by $e+t-ek$. 
Hence  $\rank(C_H) \leq k+(e+t-ek) =1+t-(e-1)(k-1)$.
From \Cref{lemm:3.1}, for any $v\notin \K(C_H)$, the intersection of the linear space over $\F_q$ generated by $v$ and $C_H$ is, at most, of dimension $e-1$ over $\F_p$. 
Therefore, for the lower bound, we have that $\frac{e+t-ek}{e-1}+k= \frac{e+t-k}{e-1} \leq \rank(C_H)$ and item $(iii)$ follows.

For item $(iv)$, when $C_H$ is linear over $\F_q$, we have that $C_H=\K(C_H)$. Since $|C_H|= p^{e+t}=q^{1+t/e}$, $t$ is a multiple of $e$ and $\rank(C_H)=k=1+t/e$. Otherwise, $1\leq k < 1+t/e$ by \Cref{bounds} and item $(ii)$. In this case, if $t$ is a multiple of $e$, clearly $k\leq  \lfloor t/e \rfloor$. Finally, if $t$ is not a multiple of $e$, by \Cref{lemm:maxkernel}, since
$n=q^{\lfloor t/e \rfloor} p^{s'}$, where $1<p^{s'}<q$, we have that $k\leq \lfloor t/e \rfloor$.
\end{proof}

\begin{corollary} \label{coro:bounds-a}
Let $H(q,\lambda)$ be a $\gh$ matrix over $\F_q$, where $q=p^2$ and $p$ prime. If $C_H$ is an $\F_p$\nobreakdash-additive code of length $n=q\lambda=p^t$, then
\begin{enumerate}[(i)]
\item $\rank(C_H)+\ker(C_H) =2+t$.
\item If $2 \nmid t$, then $\rank(C_H)-\ker(C_H) \geq 3$.
\item If $2 \mid t$ and $C_H$ is nonlinear over $\F_q$, then $\rank(C_H)-\ker(C_H) \geq 2$.
\end{enumerate}
\end{corollary}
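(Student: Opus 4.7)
The plan is to specialize \Cref{bounds-a} to the case $e=2$ and observe that the bounds collapse nicely. Since $q=p^2$ and the length is $n=p^t$, we automatically have $s=1$ in the notation of \Cref{bounds-a}.

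For item $(i)$, I would substitute $e=2$ into item $(iii)$ of \Cref{bounds-a}. The lower bound becomes $\frac{2+t-k}{1}=2+t-k$, while the upper bound becomes $1+t-(k-1)=2+t-k$. These coincide, so $\rank(C_H)=2+t-\ker(C_H)$, which is exactly the asserted identity $\rank(C_H)+\ker(C_H)=2+t$.

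For items $(ii)$ and $(iii)$, I would combine $(i)$ with the bounds on $\ker(C_H)$ coming from item $(iv)$ of \Cref{bounds-a}. If $2\nmid t$, then $C_H$ cannot be linear over $\F_q$ (otherwise $t$ would be a multiple of $e=2$), so $\ker(C_H)\leq \lfloor t/2 \rfloor = (t-1)/2$. Substituting into the identity from $(i)$ gives $\rank(C_H)-\ker(C_H)=2+t-2\ker(C_H)\geq 2+t-(t-1)=3$. Similarly, if $2 \mid t$ and $C_H$ is nonlinear over $\F_q$, then $\ker(C_H)\leq \lfloor t/2 \rfloor = t/2$, and the same identity yields $\rank(C_H)-\ker(C_H) \geq 2+t-t=2$.

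No step looks like an obstacle here: everything is a direct specialisation, and the only subtlety is noticing that the lower and upper bounds of \Cref{bounds-a}$(iii)$ actually coincide when $e=2$, which is the whole source of the rigidity expressed in $(i)$. The remaining two inequalities then reduce to the elementary arithmetic above together with the parity observation that $t$ odd forces nonlinearity.
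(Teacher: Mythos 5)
Your proposal is correct and follows essentially the same route as the paper: item $(i)$ comes from noting that the lower and upper bounds of \Cref{bounds-a}$(iii)$ coincide when $e=2$, and items $(ii)$ and $(iii)$ follow by combining that identity with the kernel bound $\ker(C_H)\leq \lfloor t/2\rfloor$ from \Cref{bounds-a}$(iv)$, exactly as in the paper's argument with $t=2h+1$ and $t=2h$.
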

\begin{proof}
The first item is straightforward from item $(iii)$ in \Cref{bounds-a}.

For the second item, if $2=e \nmid t$, then $t=2h+1$. From item $(iv)$ in \Cref{bounds-a},
$\ker(C_H)\leq h$ and so $\rank(C_H)-\ker(C_H) \geq \rank(C_H)-h=2+t-\ker(C_H)-h\geq 2+t-2h=3$. For the third item, we can follow a similar argument, but considering that $t=2h$.
\end{proof}

\begin{example} \label{ex:e3}
For $q=p^3$, the second column in Table~\ref{tableF8bounds} gives all  possible values for the dimension of the kernel of $\F_p$-additive $\gh$  codes over $\F_{p^3}$ of length $n=p^t$ with $2\leq t \leq 12$. For each one of these values, the third column shows the possible values for the rank, given by \Cref{bounds-a}. 
\end{example}

\begin{table}[htp]%
\begin{center}
\begin{tabular}{cccc}
\toprule%
$t$ & $\ker(C_H)$& $\rank(C_H)$ & $\ker_p(C_H)=\rank_p(C_H)$ \\ \toprule
3   &   2        &  2   & 2\\
    &   1        &  {\bf 3,4} & 2\\ \hline 
4   &   1        &  {\bf 3,4},5   &7/3\\ \hline
5   &   1        & {\bf 4,5},6   &8/3\\   \hline
6   &   3        & 3   &3\\
    &   2         &  4,5  &3             \\ 
    &   1         &  {\bf 4,5,6},7 &3                \\ \hline
7   &   2      &    {\bf 4,5},6   &10/3\\
    &   1      &  {\bf 5,6,7},8 &10/3 \\ \hline
8   &   2      &  {\bf 5,6},7   &11/3\\
    &   1      &  {\bf 5,6,7,8},9 &11/3               \\   \hline
9   &   4       & 4 &4\\
    &   3        &  5,6 &4 \\
    &   2        & {\bf 5},6,7,8 &4\\ 
    &   1        & {\bf 6,7,8,9},10 &4 \\  \hline
10  &   3        & {\bf 5,6},7 & 13/3 \\
    &   2        & {\bf 6,7},8,9 &13/3 \\ 
    &   1        & {\bf 6,7,8,9,10},11 &13/3 \\  \hline  
11  &   3        &   {\bf 6,7},8 & 14/3\\
    &   2        &  {\bf 6,7,8},9,10 & 14/3\\ 
    &   1        &  {\bf 7,8,9,10,11},12 & 14/3 \\ \hline 
12   &   5       & 5 &5\\
    &   4        &  6,7 &5\\
    &   3        &  {\bf 6},7,8,9 &5\\ 
    &   2        & {\bf 7},8,9,10,11 &5  \\  
    &   1         & {\bf 7,8,9,10,11,12},13 &5\\ \bottomrule
\end{tabular}\caption{Parameters $\ker(C_H)$ and $\rank(C_H)$ for all  $\F_p$-additive $\gh$ codes $C_H$ over $\F_{p^3}$
of length $n=p^t$ with $3\leq t\leq 12$.}\label{tableF8bounds}
\end{center}
\end{table}

\section{Kronecker and switching constructions}
\label{sec:Kronecker}

In this section, we show that by using the Kronecker sum construction from $\F_p$-additive $\gh$ codes, we also obtain $\F_p$-additive $\gh$ codes.
Moreover, we present a switching construction that allows for the production of $\F_p$-additive $\gh$ codes. For all these constructions, we establish the values of the rank and dimension of the kernel for the obtained codes.

\medskip
A standard method to construct $\gh$ matrices from other $\gh$ matrices
is given by the {\it Kronecker sum construction} \cite{seo}, \cite{shr}. That is,
if $H(q,\lambda)=(h_{ij})$ is any $q\lambda \times q\lambda$ $\gh$ matrix over $\F_q$, and $B_1, B_2,
\ldots, B_{q\lambda}$ are any $q\mu\times q\mu$ $\gh$ matrices over $\F_q$, then the
matrix in Table~\ref{kron} gives a $q^2\lambda \mu\times q^2\lambda \mu$ $\gh$ matrix over $\F_q$, denoted by $H \oplus [B_1,
B_2, \ldots, B_n]$, where $n=q\lambda$. If $B_1=B_2=\cdots =B_n=B$, then we write $H \oplus [B_1, B_2,
\ldots, B_n]=H \oplus  B$.

\begin{table}[ht]
 \centering $ H \oplus [B_1,
B_2, \ldots, B_n]=\left(%
\begin{array}{cccc}
  h_{11}+B_1 & h_{12}+B_1 & \cdots & h_{1n}+B_1 \\
  h_{21}+B_2 & h_{22}+B_2 & \cdots & h_{2n}+B_2 \\
  \vdots & \vdots & \vdots & \vdots \\
  h_{n1}+B_n & h_{n2}+B_n & \cdots & h_{nn}+B_n \\
\end{array}%
\right)$
\caption{\label{kron} Kronecker sum construction}
\end{table}

Let $S_q$ be the normalized $\gh$ matrix $H(q,1)$ given by the multiplicative
table of $\F_q$. As for ordinary Hadamard matrices over $\F_2$, starting from a $\gh$ matrix $S^1=S_q$, we can
recursively define $S^h$ as a $\gh$ matrix $H(q,q^{h-1})$, constructed as $S^h=S_q \oplus
[S^{h-1},S^{h-1},\ldots,S^{h-1}]=S_q \oplus S^{h-1}$ for $h > 1$,
which is called a {\it Sylvester $\gh$ matrix}. Note that the corresponding $\gh$ code $C_{S^h}$ is linear
over $\F_q$, so $\rank(C_{S^h})=\ker(C_{S^h})=1+h$, by \Cref{lemm:1} or item $(iv)$ of \Cref{bounds-a}.

Now, we recall some known results on the rank and dimension of the kernel for $\gh$ codes constructed by using the Kronecker sum construction. In these cases, starting with $\F_p$-additive $\gh$ codes, we obtain $\F_p$-additive $\gh$ codes.

\begin{lemma}\label{lem:Kro1} %\cite[Lemma 3]{DRV15}
Let $H_1$ and $H_2$ be two $\gh$ matrices over $\F_q$ and $H=H_1 \oplus H_2$. %Let $K$ be a subfield of $\F_q$. 
Then $\rank(C_H)=\rank(C_{H_1})+\rank(C_{H_2})-1$ and $\ker(C_H)=\ker(C_{H_1})+\ker(C_{H_2})-1$. Moreover, if $C_{H_1}$ and $C_{H_2}$ are $K$\nobreakdash-additive, then $C_{H}$ is also $K$\nobreakdash-additive and $\rank_p(C_H)=\rank_p(C_{H_1})+\rank_p(C_{H_2})-1$.
\end{lemma}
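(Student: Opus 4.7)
The plan is to give $C_H$ a tensor-style description and then read off all three claims from basic linear algebra. A direct inspection of the Kronecker sum construction shows that, if $r_i^{(1)}$ denotes the $i$-th row of $H_1$ and $r_k^{(2)}$ the $k$-th row of $H_2$, then the row of $H$ indexed by $(i,k)$ equals $r_i^{(1)}\otimes{\bf 1}_{n_2}+{\bf 1}_{n_1}\otimes r_k^{(2)}$, where $n_i$ is the length of $C_{H_i}$. Absorbing the $\alpha{\bf 1}$-translates into either tensor factor, one obtains
\[
C_H=\{c_1\otimes{\bf 1}_{n_2}+{\bf 1}_{n_1}\otimes c_2 : c_1\in C_{H_1},\ c_2\in C_{H_2}\},
\]
so $C_H=\Phi(C_{H_1}\times C_{H_2})$, where $\Phi(c_1,c_2)=c_1\otimes{\bf 1}_{n_2}+{\bf 1}_{n_1}\otimes c_2$ is $\F_p$-linear (and $\F_q$-linear on the ambient space). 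Since two pairs $(c_1,c_2),(c_1',c_2')$ give the same image exactly when $c_1-c_1'=\gamma{\bf 1}_{n_1}$ and $c_2-c_2'=-\gamma{\bf 1}_{n_2}$ for some $\gamma\in\F_q$, the fibres of $\Phi$ are one-dimensional. Using also ${\bf 1}_{n_i}\in C_{H_i}$, this yields the basic criterion $\Phi(u,v)\in C_H\iff u\in C_{H_1}\text{ and }v\in C_{H_2}$.

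The first two claims are now almost immediate. $K$-additivity of $C_H$ follows from $K$-linearity of $\Phi$ and $K$-additivity of $C_{H_1}\times C_{H_2}$. For the rank, the $\F_q$-span satisfies $\langle C_H\rangle=\Phi(\langle C_{H_1}\rangle\times\langle C_{H_2}\rangle)$; since ${\bf 1}_{n_i}\in\langle C_{H_i}\rangle$, the restriction of $\Phi$ here still has one-dimensional kernel, and rank-nullity gives $\rank(C_H)=\rank(C_{H_1})+\rank(C_{H_2})-1$. Replacing the $\F_q$-span by the $\F_p$-span throughout yields the identity for $\rank_p$.

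For the kernel, write $\vx=\Phi(x_1,x_2)\in C_H$. For every $\alpha\in\F_q$ and $y_i\in C_{H_i}$ one computes $\alpha\vx+\Phi(y_1,y_2)=\Phi(\alpha x_1+y_1,\alpha x_2+y_2)$, so by the criterion above, $\vx\in\K(C_H)$ is equivalent to $\alpha x_i+C_{H_i}\subseteq C_{H_i}$ for $i=1,2$ and all $\alpha\in\F_q$, hence (by equal cardinality) to $x_i\in\K(C_{H_i})$. Thus $\K(C_H)=\Phi(\K(C_{H_1})\times\K(C_{H_2}))$, and since ${\bf 1}_{n_i}\in\K(C_{H_i})$ the restricted $\Phi$ again has a one-dimensional kernel, producing $\ker(C_H)=\ker(C_{H_1})+\ker(C_{H_2})-1$. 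The main delicate point is the non-uniqueness of the decomposition $\vx=\Phi(x_1,x_2)$: one has to verify that the characterization is invariant under the ambiguity $(x_1,x_2)\mapsto(x_1+\gamma{\bf 1},x_2-\gamma{\bf 1})$, which is exactly what ${\bf 1}_{n_i}\in\K(C_{H_i})$ guarantees. Once this is in hand, the rank and kernel statements follow uniformly from the same rank-nullity count for $\Phi$ applied to the appropriate subspaces.
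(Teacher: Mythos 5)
Your argument is correct, and it is necessarily more detailed than the paper's, which simply defers to the proof of Lemma~3 in \cite{DRV15} without reproducing it. Your tensor description of the Kronecker sum is accurate: the row of $H$ indexed by $(i,k)$ really is $r_i^{(1)}\otimes\mathbf{1}_{n_2}+\mathbf{1}_{n_1}\otimes r_k^{(2)}$, the absorption of the $\alpha\mathbf{1}$-translates into the first factor uses exactly that $C_{H_1}+\beta\mathbf{1}=C_{H_1}$, and the computation of the fibres of $\Phi$ (forcing $c_1-c_1'=\gamma\mathbf{1}$, $c_2-c_2'=-\gamma\mathbf{1}$) is right, since a vector that is constant on blocks and also constant within blocks must be a multiple of $\mathbf{1}$. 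The criterion $\Phi(u,v)\in C_H\iff u\in C_{H_1},\ v\in C_{H_2}$ then makes the additivity, rank and kernel statements all fall out of the same rank--nullity count, and your remark that $\mathbf{1}_{n_i}\in\K(C_{H_i})$ is precisely what makes the characterisation independent of the representative $(x_1,x_2)$ is the right point to flag. The one place where I would add a sentence is the $\rank_p$ claim: as an $\F_p$-linear map, $\Phi$ has kernel of $\F_p$-dimension $e$, not $1$; the ``$-1$'' in the statement is correct only because the paper normalises $\F_p$-dimensions by defining the dimension $k$ of an $\F_p$-additive code via $q^k=|C|$, so that an $\F_p$-subspace of size $q=p^e$ has dimension $1$. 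With that normalisation made explicit, ``replacing the $\F_q$-span by the $\F_p$-span throughout'' is exactly right, and your proof is a clean, uniform treatment of all three identities.
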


\begin{proof}
Straightforward from the proof of Lemma 3 in  \cite{DRV15}.
\end{proof}

\begin{corollary}\label{coro:2.2}Let $B$ be a $\gh$ matrix over $\F_q$ and $H=S_q \oplus B$.
Then  $\rank(C_H)=\rank(C_B)+1$ and $\ker(C_H)=\ker(C_B)+1$.
Moreover, if $C_B$ is $\F_p$-additive, then $C_H$ is also $\F_p$-additive and $\rank_p(C_H)=\rank_p(C_{B})+1$.
\end{corollary}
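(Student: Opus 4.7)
The plan is to derive this corollary as a direct specialisation of \Cref{lem:Kro1} by taking $H_1 = S_q$ and $H_2 = B$, so the bulk of the work is simply computing the three parameters $\rank(C_{S_q})$, $\ker(C_{S_q})$, and $\rank_p(C_{S_q})$ and checking that $C_{S_q}$ is $\F_p$-additive.

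First I would recall that $S_q = S^1$ is the base case of the Sylvester family, so by the observation immediately preceding \Cref{lem:Kro1} (or by item $(iv)$ of \Cref{bounds-a} applied to the linear code $C_{S_q}$), the code $C_{S_q}$ is linear over $\F_q$ with $\rank(C_{S_q}) = \ker(C_{S_q}) = 2$. Being linear over $\F_q$, it is automatically linear over the subfield $\F_p$, so it is $\F_p$-additive; moreover, since $|C_{S_q}| = q^2 = p^{2e}$ has length $n=q=p^e$, item $(ii)$ of \Cref{bounds-a} gives $\rank_p(C_{S_q}) = 1 + e/e = 2$.

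Then I would plug these three values into the three formulas supplied by \Cref{lem:Kro1}:
\begin{align*}
\rank(C_H) &= \rank(C_{S_q}) + \rank(C_B) - 1 = \rank(C_B) + 1,\\
\ker(C_H) &= \ker(C_{S_q}) + \ker(C_B) - 1 = \ker(C_B) + 1,
\end{align*}
and, assuming $C_B$ is $\F_p$-additive, the ``moreover'' clause of \Cref{lem:Kro1} applied with $K=\F_p$ yields that $C_H$ is $\F_p$-additive with
\[
\rank_p(C_H) = \rank_p(C_{S_q}) + \rank_p(C_B) - 1 = \rank_p(C_B) + 1.
\]

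There is no real obstacle here; the only point that needs a moment of care is citing the right justification for $\rank(C_{S_q}) = \ker(C_{S_q}) = 2$ and $\rank_p(C_{S_q}) = 2$, which follows from the discussion of the Sylvester matrices $S^h$ (giving $1+h$ with $h=1$) together with the fact that a code linear over $\F_q$ is automatically $\F_p$-additive of equal $p$-rank and $p$-kernel dimension. Once that is in place, the three equalities drop out of \Cref{lem:Kro1} without any further calculation.
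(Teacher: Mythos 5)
Your proposal is correct and follows essentially the same route as the paper, which simply cites the proof of Corollary~4 in \cite{DRV15} --- that corollary is precisely the specialisation of the Kronecker sum lemma (\Cref{lem:Kro1}) to $H_1=S_q$, which is what you carry out explicitly. Your computation of $\rank(C_{S_q})=\ker(C_{S_q})=\rank_p(C_{S_q})=2$ and the observation that an $\F_q$-linear code is automatically $\F_p$-additive are exactly the ingredients needed, so nothing is missing.
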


\begin{proof}
Straightforward from the proof of Corollary 4 in \cite{DRV15}.
\end{proof}

\medskip
Several switching constructions have been used to construct perfect 
codes in \cite{merce2}, \cite{merce3} ordinary Hadamard codes over $\F_2$ in \cite{HadPower2}, \cite{HadAnyPower}, and generalised Hadamard codes over $\F_q$ in \cite{DRV15}. In this paper, we present different constructions, based on this technique, in order to obtain $\F_p$-additive $\gh$ codes with different ranks and dimensions of the kernel. The first switching construction, given by \Cref{constructionSwitch0}, allows us to construct $\F_p$\nobreakdash-additive $\gh$ codes over $\F_{p^e}$ of length $n=p^{2e}$ with kernel of dimension 2 and rank $4$.

\begin{proposition}[Switching Construction I]\label{constructionSwitch0}
For $q=p^e$, $p$ prime, and any $e>1$,
there exists a $\gh$ matrix $H(p^e, p^{e})$ such that $C_H$  is an $\F_p$\nobreakdash-additive code over
$\F_{p^e}$ of length $n=p^{2e}=q^2$ with $\ker(C_H)=2$ and
$\rank(C_H)=4$.
\end{proposition}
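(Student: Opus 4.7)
The plan is to construct $H$ by ``switching'' the Sylvester matrix $S^2 = S_q\oplus S_q$: index its rows and columns by pairs $(x,y),(j,l)\in\F_q\times\F_q$ and add to the $(x,y)$-row the vector whose $(j,l)$-entry is $(xl)^p$. Equivalently, define
$$H\bigl[(x,y),(j,l)\bigr]= xj + yl + (xl)^p.$$
Here $\sigma\colon\F_q\to\F_q$, $\sigma(t)=t^p$, is the Frobenius, which is $\F_p$-linear but, since $e>1$, not $\F_q$-linear; this is exactly what will break the $\F_q$-linearity of the code while preserving $\F_p$-additivity.

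First I would check that $H$ is a normalized $\gh$ matrix $H(q,q)$. Normalization is clear since the $(0,0)$-row and the $(0,0)$-column are zero. For the $\gh$ property, the componentwise difference of two rows $(x_1,y_1)$ and $(x_2,y_2)$ at column $(j,l)$ equals $(x_1-x_2)j+(y_1-y_2)l+((x_1-x_2)l)^p$, using additivity of $\sigma$. If $x_1\ne x_2$, then for each fixed $l$ the term $(x_1-x_2)j$ already ranges uniformly over $\F_q$ as $j$ varies; if $x_1=x_2$ but $y_1\ne y_2$, the Frobenius contribution cancels and the standard $S_q$ argument applies. In either case each element of $\F_q$ is hit exactly $q$ times.

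Next I would establish the structural invariants. Parametrising codewords as $c(x,y,\alpha)_{(j,l)}=xj+yl+(xl)^p+\alpha$ for $(x,y,\alpha)\in\F_q^3$, the identities $(a+b)^p=a^p+b^p$ and $\beta^p=\beta$ for $\beta\in\F_p$ make $C_H$ closed under sums and under $\F_p$-scalar multiplication, so $C_H$ is $\F_p$-additive. By \Cref{lemm:3.1}, $c(x,y,\alpha)\in\K(C_H)$ iff $\mu\cdot c(x,y,\alpha)\in C_H$ for every $\mu\in\F_q$; writing this out, matching the coefficient of $j$, and reading the remainder as a polynomial identity in $l$ of degree at most $p$ forces $(\mu-\mu^p)x^p=0$ for every $\mu\in\F_q$. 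Since $e>1$ there exists $\mu\in\F_q\setminus\F_p$, so $x=0$, which gives $\K(C_H)=\{c(0,y,\alpha):y,\alpha\in\F_q\}$ of $\F_q$-dimension $2$. For the rank, every codeword lies in the $\F_q$-span of $\mathbf{1}$ and the three ``coordinate'' vectors with entries $j$, $l$, and $l^p$ at $(j,l)$; combining $c(1,0,0)$ and $c(\mu,0,0)$ for $\mu\notin\F_p$ (a $2\times 2$ coefficient matrix with nonzero determinant $\mu^p-\mu$) recovers the $j$- and $l^p$-vectors separately, so the $\F_q$-span is generated by these four vectors, and linear independence follows because $\alpha+\beta j+\gamma l+\delta l^p$ vanishing on all of $\F_q^2$ is a polynomial of degree at most $p<q$, hence forces $\alpha=\beta=\gamma=\delta=0$.

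The main obstacle is the polynomial-identity step: both the kernel computation and the linear independence of the four generators reduce to the fact that a polynomial in $l$ of degree at most $p$ vanishing on all of $\F_q$ must be the zero polynomial. This is false when $q=p$ and is precisely what makes the hypothesis $e>1$ essential; once it is in place, every remaining step is a routine coefficient-matching calculation.
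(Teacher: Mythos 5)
Your proof is correct, but it takes a genuinely different route from the paper's. The paper also starts from $S^2=S_q\oplus S_q$, but it works coset-wise: it partitions the rows into the $q$ cosets $K+\beta\vv_1$ of $K=\langle\vv_2\rangle$ and translates each by $\beta_e\vg$, where $\beta\mapsto\beta_e$ is an $\F_p$-linear coordinate functional and $\vg$ is a copy of $(0,1,\omega,\dots,\omega^{q-2})$ supported on a \emph{single} block of $q$ columns; $\F_p$-additivity then follows from $K_\beta+K_\gamma=K_{\beta+\gamma}$, and the rank/kernel claims are asserted as easy to check. You instead perturb the Sylvester bilinear form globally by a Frobenius term, $H[(x,y),(j,l)]=xj+yl+(xl)^p$, so the $\F_p$-linear-but-not-$\F_q$-linear map is $x\mapsto x^p$ rather than $\beta\mapsto\beta_e$, and the perturbation is spread over all coordinates. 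What your approach buys is that the invariants become transparent: the codewords are exactly the functions $\alpha+xj+yl+x^pl^p$, and both the kernel computation (via \Cref{lemm:3.1}) and the rank computation reduce to the linear independence of $1,j,l,l^p$ as functions on $\F_q^2$, which holds precisely because $p<q$, i.e.\ $e>1$ --- so the role of the hypothesis is visible rather than hidden in a block computation. What the paper's block-supported $\vg$ buys is iterability: the same template with several vectors $\vg_j$ on disjoint blocks yields Switching Constructions II and III (\Cref{constructionSwitch,constructionSwitch2}) with intermediate ranks, whereas the Frobenius twist is more rigid (one would need further ideas, e.g.\ other $p$-power exponents, to interpolate the rank). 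All the individual steps you give --- the GH verification by row differences, the injectivity of the parametrisation, the determinant $\mu^p-\mu\neq0$ for $\mu\notin\F_p$, and the degree bound $p<q$ --- are sound, so the argument is complete as it stands.
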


\begin{proof}
Let $\zero$, $\one$, $\bfomega^{(1)}$, $\ldots$, $\bfomega^{(q-2)}$ be the elements $0, 1,
\omega, \ldots, \omega^{q-2}$ repeated $q$ times, respectively, where $\omega$ is a primitive element in $\F_q$.
Let $S^2=S_q \oplus S_q$ be the Sylvester $\gh$ matrix $H(q,q)$.
We can assume without loss of generality that $S^2$ is generated over $\F_q$ by the row vectors $\vv_1$ and $\vv_{2}$ of length $n=q^2$,
where
\begin{equation*}
\begin{split}
&\vv_1=(0, 1, \omega^{1}, \ldots, \omega^{q-2}, \ldots, 0, 1, \omega^{1}, \ldots, \omega^{q-2} ),  \ \ \textrm{and}\\
&\vv_2=(\zero, \one, \bfomega^{(1)}, \ldots, \bfomega^{(q-2)}).
\end{split}
\end{equation*}

Let $K$ be the linear subcode of $S^2$ generated by the row vector $\vv_2$.
The rows of $S^2$ can be partitioned into $q$ cosets of $K$, that is, $S^2=\cup_{\beta\in \F_q} (K+\beta \vv_1)$.
Let $\beta_e \in \F_p$ be the last coordinate of the element $\beta \in \F_q$ represented
as a vector from $\F_p^e$. Then we construct the matrix
$$H=(S^2\backslash \bigcup_{\substack{\beta\in \F_q,\\ \beta_e\not =0}} (K+\beta \vv_1)) \cup \bigcup_{\substack{\beta\in \F_q,\\ \beta_e \not =0}} (K+\beta \vv_1 + \beta_e\vg)=\bigcup_{\beta\in \F_q} K_\beta,$$
where $K_\beta = K+\beta \vv_1 + \beta_e \vg$ and $\vg=( 0, \ldots, 0, 0,1,\omega^{1}, \ldots, \omega^{q-2})$.

It is easy to see that $H$ is a $\gh$ matrix and $C_H$ is an $\F_p$-additive code.
Indeed, note that  $K_\beta + K_\gamma = K_{\beta+\gamma}$ for all $\beta, \gamma \in \F_q$.
Moreover, clearly, $\rank(F_H)=2+1=3$ and $K\subseteq \K(F_H)$.
It is also easy to prove that $K=\K(F_H)$.
Therefore, $\ker(F_H)=1$. By \Cref{lemm:1}, $\ker(C_H)=2$ and $\rank(C_H)=4$.
\end{proof}

\begin{example}
We construct a $\gh$ matrix $H(2^2,2^2)$ such that $C_H$ is an $\F_2$\nobreakdash-additive code
over $\F_{2^2}$ of length $n=2^4$ with $\ker(C_H)=2$ and $\rank(C_H)=4$.
We start with the $\gh$ matrix $S^2=S_4 \oplus S_4$, which is linear over $\F_{2^2}$ and is generated by
$\vv_1=(0, 1, \omega, \omega^2, 0, 1, \omega, \omega^2, 0, 1, \omega, \omega^2, 0, 1, \omega, \omega^2)$ and
$\vv_2=(0, 0, 0, 0, 1, 1, 1, 1, \omega,$ $\omega, \omega, \omega, \omega^2, \omega^2, \omega^2, \omega^2)$,
where $\omega$ is a primitive element in $\F_{2^2}$ and $\omega^2=\omega+1$.
Let $K=\langle \vv_2 \rangle$. By the proof of \Cref{constructionSwitch0}, we obtain the $\gh$ matrix
$H(2^2,2^2)=K \cup (K+\vv_1) \cup (K+\omega \vv_1+\vg) \cup (K+\omega^2 \vv_1+\vg)$,
where $\vg=(0,0,0,0,0,0,0,0,0,0,0,0,0,1,\omega,\omega^2)$, that is, the matrix
\begin{equation}
{\footnotesize
\left(
\begin{array}{cccccccc cccccccc}
0&0&0&0&0&0&0&0 &0&0&0&0&0&0&0&0\\
0&0&0&0&1&1&1&1 &\omega&\omega&\omega&\omega&\omega^2&\omega^2&\omega^2&\omega^2\\
0&0&0&0&\omega&\omega&\omega&\omega&\omega^2&\omega^2&\omega^2&\omega^2&1&1&1&1\\
0&0&0&0&\omega^2&\omega^2&\omega^2&\omega^2&1&1&1&1 &\omega&\omega&\omega&\omega\\

0&1&\omega&\omega^2& 0&1&\omega&\omega^2   &0&1&\omega&\omega^2& 0&1&\omega&\omega^2\\
0&1&\omega&\omega^2& 1&0&\omega^2& \omega  &\omega&\omega^2&0&1&\omega^2&\omega&1&0\\
0&1&\omega&\omega^2& \omega&\omega^2& 0&1  &\omega^2&\omega& 1&0&1&0&\omega^2&\omega\\
0&1&\omega&\omega^2& \omega^2&\omega&1&0   &1& 0&\omega^2&\omega&\omega&\omega^2&0&1\\

0&\omega&\omega^2&1 &0&\omega&\omega^2&1 &0&\omega&\omega^2&1  &0&\omega^2&1&\omega\\
0&\omega&\omega^2&1 &1&\omega^2&\omega&0 &\omega&0&1&\omega^2  &\omega^2&0&\omega&1\\
0&\omega&\omega^2&1 &\omega&0&1&\omega^2 &\omega^2&1&0&\omega  &1&\omega&0&\omega^2\\
0&\omega&\omega^2&1 &\omega^2&1&0&\omega &1&\omega^2&\omega&0  &\omega&1&\omega^2&0\\

0&\omega^2&1&\omega &0&\omega^2&1&\omega &0&\omega^2&1&\omega &0&\omega&\omega^2&1\\
0&\omega^2&1&\omega &1&\omega&0&\omega^2 &\omega&1&\omega^2&0 &\omega^2&1&0&\omega\\
0&\omega^2&1&\omega &\omega&1&\omega^2&0 &\omega^2&0&\omega&1 &1&\omega^2&\omega&0\\
0&\omega^2&1&\omega &\omega^2&0&\omega&1 &1&\omega&0&\omega^2 &\omega&0&1&\omega^2
\end{array}
\right).}
\end{equation}
\end{example}

The switching construction given by \Cref{constructionSwitch0} can be generalised to the case $H(p^e, p^{(h-1)e})$ with $h>1$, that is, when $t$ is a multiple of $e$. A first generalisation is shown in \Cref{constructionSwitch}, and a second one in \Cref{constructionSwitch2}.

\begin{proposition}[Switching Construction II]\label{constructionSwitch} 
For $q=p^e$, $p$ prime, and any $e>1$, $h>1$,
there exists a $\gh$ matrix $H(p^e, p^{(h-1)e})$ such that $C_H$  is an $\F_p$\nobreakdash-additive code over
$\F_{p^e}$ of length $n=p^{he}=q^h$ with $\ker(C_H)=h$ and
$\rank(C_H)=r$ for all $r \in \{h+2,\ldots,h+e\}$.
\end{proposition}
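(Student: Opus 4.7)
The plan is to reduce the general case $h > 2$ to the base case $h = 2$ via the Kronecker sum construction (\Cref{coro:2.2}), and to handle the base case by extending the switching technique from \Cref{constructionSwitch0} to use up to $e - 1$ independent switching vectors.

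I would begin with the base case $h = 2$. For each $j \in \{1, \ldots, e - 1\}$, build a $\gh$ matrix $H_0(p^e, p^e)$ such that $C_{H_0}$ is $\F_p$-additive with $\ker(C_{H_0}) = 2$ and $\rank(C_{H_0}) = 3 + j$. Following the setup of \Cref{constructionSwitch0}, take $S^2 = S_q \oplus S_q$ generated over $\F_q$ by $\vv_1, \vv_2$, and set $K = \langle \vv_2 \rangle$ so that $S^2 = \bigcup_{\beta \in \F_q} (K + \beta \vv_1)$. Fix an $\F_p$-basis $\{1, \omega, \ldots, \omega^{e-1}\}$ of $\F_q$ and write $\beta = \beta_1 + \beta_2 \omega + \cdots + \beta_e \omega^{e-1}$ with $\beta_i \in \F_p$. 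Pick $j$ distinct indices $i_1, \ldots, i_j \in \{2, \ldots, e\}$ and $j$ distinct nonzero block positions $a_1, \ldots, a_j \in \F_q$. Let $\vg_k$ be the vector that vanishes outside the block $a_k$ and equals $(0, 1, \omega, \ldots, \omega^{q-2})$ inside that block (the same pattern used for $\vg$ in \Cref{constructionSwitch0}). Define
\[
K_\beta = K + \beta \vv_1 + \sum_{k=1}^{j} \beta_{i_k} \vg_k
\quad\text{and}\quad
H_0 = \bigcup_{\beta \in \F_q} K_\beta.
\]
The relation $K_\alpha + K_\beta = K_{\alpha+\beta}$ is immediate from the $\F_p$-linearity of $\beta \mapsto \beta_{i_k}$, which ensures that $C_{H_0}$ is $\F_p$-additive. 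For the rank, the vectors $\vv_1, \vv_2, \vg_1, \ldots, \vg_j$ are $\F_q$-linearly independent since the $\vg_k$ have pairwise disjoint supports while $\vv_1$ has the same pattern in every block and $\vv_2$ is block-constant; and every row of $F_{H_0}$ lies in their $\F_q$-span, giving $\rank(F_{H_0}) = 2 + j$. For the kernel, the inclusion $K \subseteq \K(F_{H_0})$ is clear from the coset decomposition; conversely, if $v = k_0 + \beta_0 \vv_1 + \sum_k (\beta_0)_{i_k} \vg_k$ lies in $\K(F_{H_0})$, then $\alpha v \in F_{H_0}$ for all $\alpha \in \F_q$ by the analogue of \Cref{lemm:3.1} for $F_H$, and comparing coefficients of the independent generators forces $\alpha (\beta_0)_{i_k} \in \F_p$ for every $\alpha \in \F_q$; taking $\alpha \notin \F_p$ (which exists since $e > 1$) gives $(\beta_0)_{i_k} = 0$ for all $k$, and then a similar coefficient argument forces $\beta_0 = 0$.

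The main obstacle is confirming the $\gh$ property after switching. Writing each column as a pair $(a, b) \in \F_q \times \F_q$, the difference between row $(\beta, \gamma)$ and row $(\beta', \gamma')$ at column $(a, b)$ equals $(\gamma - \gamma') a + c_a b$, where
\[
c_a = (\beta - \beta') + \sum_{k=1}^{j} (\beta_{i_k} - \beta'_{i_k})\, \delta_{a, a_k}.
\]
One must show that each element of $\F_q$ appears exactly $\lambda = q$ times in the list of these differences. When $\beta = \beta'$ but $\gamma \neq \gamma'$, each $c_a$ vanishes and the count reduces to that of $(\gamma - \gamma') a$, which is immediate. When $\beta \neq \beta'$, the count holds block-by-block provided $c_a \neq 0$ for every $a$. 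The restriction $i_k \in \{2, \ldots, e\}$ is exactly what guarantees this: if $\beta - \beta' \notin \F_p$, then $\beta - \beta'$ has a nonzero $\omega^\ell$-component with $\ell \geq 1$ that survives the $\F_p$-perturbation at $a = a_k$; if $\beta - \beta' \in \F_p^*$, then $\beta_{i_k} - \beta'_{i_k} = 0$ because only the coordinate at index $1$ of $\beta - \beta'$ can be nonzero, so $c_{a_k} = \beta - \beta' \neq 0$.

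Finally, I would lift to arbitrary $h > 2$. Given $H_0$ from the base case, set $H = S_q \oplus \cdots \oplus S_q \oplus H_0$ with $h - 2$ copies of $S_q$ and apply \Cref{coro:2.2} iteratively: each Kronecker sum with $S_q$ preserves $\F_p$-additivity and increases both the rank and the dimension of the kernel by $1$. After $h - 2$ steps, the length becomes $q^{h-2} \cdot q^2 = q^h = p^{he}$, the kernel has dimension $2 + (h - 2) = h$, and the rank equals $(3 + j) + (h - 2) = h + j + 1$. As $j$ ranges over $\{1, \ldots, e - 1\}$, the rank covers the full interval $\{h + 2, \ldots, h + e\}$ as claimed.
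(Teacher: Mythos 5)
Your proof is correct, and its core mechanism is the same as the paper's: switching the cosets $K+\beta\vv_1$ by $\F_p$-linear perturbations $\sum_k \beta_{i_k}\vg_k$ indexed by the coordinates of $\beta$ other than the one attached to $1\in\F_q$, which is exactly what makes both the additivity ($K_\alpha+K_\beta=K_{\alpha+\beta}$) and the $\gh$ property ($c_a\neq 0$ for all blocks $a$) go through. Where you differ is in the organization: the paper performs this switching directly inside $S^h$ for arbitrary $h$, taking $K=\langle \vv_2,\ldots,\vv_h\rangle$ and placing the $\vg_j$ on $s$ of the $q^{h-1}$ blocks, whereas you carry out the switching only in the base case $h=2$ and then lift by $h-2$ applications of $S_q\oplus(\cdot)$ via \Cref{coro:2.2}. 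Your route buys a cleaner verification — the $\gh$ property and the kernel computation only need to be checked for a $q^2\times q^2$ matrix, with your explicit column-pair bookkeeping $(\gamma-\gamma')a+c_a b$ filling in a step the paper leaves as ``can be seen easily'' — and the arithmetic $\ker = 2+(h-2)=h$, $\rank=(3+j)+(h-2)=h+j+1$ is immediate. What the paper's direct level-$h$ construction buys is the template that it then iterates at the levels $\vv_2,\vv_3,\ldots$ to obtain the more general Switching Construction III (\Cref{constructionSwitch2}), which a pure Kronecker reduction to $h=2$ does not directly yield. Two minor points worth making explicit if you write this up: the ``analogue of \Cref{lemm:3.1} for $F_H$'' is legitimate precisely because $F_{H_0}$ is itself $\F_p$-additive, and the choice of $j\leq e-1$ distinct nonzero blocks is always possible since $e-1<q-1$.
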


\begin{proof}
Let $S^h$ be the Sylvester $\gh$ matrix $H(q,q^{h-1})$.
We can assume without loss of generality that $S^h$ is generated by the vectors $\vv_1, \ldots, \vv_{h}$ of length $n=q^h$,
where 
$$\vv_i=(\zero_{i}, \one_{i}, \bfomega_{i}^{(1)}, \ldots, \bfomega_{i}^{(q-2)}, \ldots, \zero_{i}, \one_{i}, \bfomega_{i}^{(1)}, \ldots, \bfomega_{i}^{(q-2)} ),$$
$\zero_{i}$, $\one_{i}$, $\bfomega_{i}^{(1)}$, $\ldots$, $\bfomega_{i}^{(q-2)}$ are the elements $0, 1, \omega, \ldots, \omega^{q-2}$ repeated $q^{i-1}$ times, respectively, and $\omega$ is a primitive element in $\F_q$, for all $i\in \{1,\ldots,h\}$. All vectors $\vv_1,\ldots, \vv_h$ have length $q^h$ and are linearly independent over $\F_q$. The corresponding $\gh$ code $C_{S^h}$ is linear over $\F_q$, so $\rank(C_{S^h})=\ker(C_{S^h})=1+h$.

Let $K$ be the linear subcode of $S^h$ generated by the vectors $\vv_2,\ldots, \vv_h$.
Note that all $n=q^h$ coordinates are naturally divided into $q^{h-1}$ groups of
size $q$, which will be referred to as blocks, such that the
columns of $K$ in a block coincide. Moreover, the rows of $S^h$ can be partitioned into $q$ cosets of $K$, that is, $S^h=\cup_{\beta\in \F_q} (K+\beta \vv_1)$.

Now, consider the vectors $\vg_1, \ldots, \vg_{e-1} \in \F_q^{n}$, where $\vg_j$ has exactly the values $0,1,\omega,\ldots, \omega^{q-2}$ in the coordinate positions $(jq+1), \ldots, (j+1)q$, respectively, and  zeros elsewhere, for all $j \in \{1,\ldots,e-1\}$. Note that, for each $j\in \{1,\ldots,e-1\}$, these $q$ coordinate positions correspond to a block. Moreover, there are always enough blocks since $e \leq q \leq q^{h-1}=p^{e(h-1)}$.  
 Let $\beta \in \F_q$ be $\beta=(\beta_0,\ldots,\beta_{e-1})$ represented as a vector in $\F_{p}^e$. Then, we construct the matrix
\begin{equation}\label{eq:3.7}
    H^{(s)}=\bigcup_{\beta\in \F_q} K_\beta,
\end{equation}
where $s \in \{1,\ldots,e-1\}$ and $K_\beta = K+\beta \vv_1 + \sum_{j=1}^{s} \beta_j \vg_j$.

Next, we prove that the corresponding code $C_{H^{(s)}}$ is an $\F_p$-additive $\gh$ code for all $s\in \{1,\ldots,e-1\}$.
First of all, the length of $H^{(s)}$ is $q^h$ and the number of rows is also $q^h$. The code $C_{H^{(s)}}$  is $\F_p$\nobreakdash-additive over
$\F_{q}$, since $K_{\beta+\gamma}=K_\beta+K_\gamma$  for any $\beta, \gamma \in \F_q$. 
Finally, it can be seen easily that $H^{(s)}$ is a $\gh$ matrix 
by computing the differences between any two different rows.

By \Cref{lemm:3.1}, it is  straightforward to show that $\K(H^{(s)})=K$, so we have that
 $\ker(C_{H^{(s)}})=h-1+1=h$ by \Cref{lemm:1}. It is easy to see that all linearly independent vectors $\vv_1, \ldots, \vv_{h}$ from $S^h$ are in  $\langle C_{H^{(s)}} \rangle$. Moreover, the linearly independent vectors $\vg_1,\ldots, \vg_{s}$
 are also in $\langle C_{H^{(s)}} \rangle$. Hence, by \Cref{lemm:1}, $\rank(C_{H^{(s)}})=h+s+1$, which covers all values in the range of $h+2$ to $h+e$.
\end{proof}

We can make an slight modification in the proof of \Cref{constructionSwitch} allowing the construction of $\gh$ matrices $H(p^e,p^{(h-1)e})$ with $h>1$, where the dimension of the kernel $k$ of the corresponding codes ranges from $2$ to $h$ and for each one of these values the rank takes any value from $2h-k+2$ to $h+1+(h-k+1)(e-1)$. Note that this switching construction includes the previous two switching constructions I and II
given by \Cref{constructionSwitch0} and \Cref{constructionSwitch}, respectively. 

\begin{proposition}[Switching Construction III]\label{constructionSwitch2} 
For $q=p^e$, $p$ prime, and any $e>1$, $h>1$,
there exists a $\gh$ matrix $H(p^e, p^{(h-1)e})$ such that $C_H$  is an $\F_p$\nobreakdash-additive code over $\F_{p^e}$ of length $n=p^{he}=q^h$ with $\ker(C_H)=k$ and
$\rank(C_H)=r$, for all $k \in \{2, \ldots, h\}$ and 
$$r \in \{2h-k+2,\ldots,1+t-(e-1)(k-1)\}.$$
\end{proposition}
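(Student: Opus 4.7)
The plan is to mimic the proof of \Cref{constructionSwitch} but perform the switching simultaneously at all levels $i\in\{1,\ldots,h-k+1\}$. Start with the Sylvester matrix $S^h$ generated by $\vv_1,\ldots,\vv_h$, and take $K=\langle \vv_{h-k+2},\ldots,\vv_h\rangle$ so that $\dim_{\F_q} K=k-1$ and the cosets of $K$ in $S^h$ are indexed by $\beta=(\beta^{(1)},\ldots,\beta^{(h-k+1)})\in \F_q^{h-k+1}$. Fix integers $s_i\in\{1,\ldots,e-1\}$ for $i\in\{1,\ldots,h-k+1\}$; for each pair $(i,j)$ with $1\le j\le s_i$, pick a $q^i$-block $B_{i,j}$ on which $\vv_i$ realises its full period of values, arranging the $B_{i,j}$ (as $i,j$ vary) pairwise disjoint---possible because $\sum_i s_i q^i<q^h$---and let $\vg_{i,j}$ equal $\vv_i$ on $B_{i,j}$ and zero elsewhere. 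Writing each $\beta^{(i)}$ in the polynomial basis $\{1,\omega,\ldots,\omega^{e-1}\}$ as $\beta^{(i)}=\sum_j \beta^{(i)}_j\omega^j$, I set
\[
K_\beta=K+\sum_{i=1}^{h-k+1}\beta^{(i)}\vv_i+\sum_{i=1}^{h-k+1}\sum_{j=1}^{s_i}\beta^{(i)}_j\vg_{i,j},\qquad H=\bigcup_\beta K_\beta.
\]

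$\F_p$-additivity of $C_H$ follows from the identity $K_\beta+K_\gamma=K_{\beta+\gamma}$. For the rank, choosing $\beta^{(i)}=1$ and all other $\beta^{(i')}=0$ puts $\vv_i$ in the $\F_q$-span, and then $\beta^{(i)}=\omega^j$ puts $\omega^j\vv_i+\vg_{i,j}$ and hence $\vg_{i,j}$ in the span; since $K$, the $\vv_i$, and the $\vg_{i,j}$ together span every row, $\rank(F_H)=(k-1)+(h-k+1)+\sum_i s_i=h+\sum_i s_i$. For the kernel, $K\subseteq\K(F_H)$ is clear; conversely, writing $v\in\K(F_H)$ as $v=k+\sum_i\beta^{(i)}\vv_i+\sum_{i,j}\beta^{(i)}_j\vg_{i,j}$ and demanding $\alpha v\in F_H$ for every $\alpha\in \F_q$ gives the constraint $\alpha\beta^{(i)}_j=(\alpha\beta^{(i)})_j$; the left-hand side leaves $\F_p$ for $\alpha\notin \F_p$ unless $\beta^{(i)}_j=0$, and the resulting condition $(\alpha\beta^{(i)})_j=0$ for all $\alpha\in \F_q$, together with $s_i\ge 1$, forces $\beta^{(i)}=0$ for every $i$ and hence $v\in K$. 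Applying \Cref{lemm:1} yields $\ker(C_H)=k$ and $\rank(C_H)=1+h+\sum_i s_i$, and letting the $s_i$ range independently over $\{1,\ldots,e-1\}$ makes $\sum_i s_i$ hit every integer in $\{h-k+1,\ldots,(h-k+1)(e-1)\}$, covering the stated rank range.

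The main obstacle is verifying that $H$ is actually a $\gh$ matrix. For distinct rows $r_1\in K_\beta$, $r_2\in K_\gamma$, I write $r_1-r_2=w+\eta$ with $w=(k_1-k_2)+\sum_i\delta_i\vv_i\in S^h$ and $\eta=\sum_{i,j}\eta_{i,j}\vg_{i,j}$, where $\delta_i=\beta^{(i)}-\gamma^{(i)}$ and $\eta_{i,j}=(\delta_i)_j$. If every $\delta_i=0$ then $\eta=0$ and $w\in K\setminus\{0\}$ is a nonzero Sylvester codeword, hence balanced; otherwise let $i^*=\min\{i:\delta_i\ne 0\}$ and I show $w+\eta$ is balanced on every $q^{i^*}$-block. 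The key observation, enabled by the polynomial basis, is that $\eta_{i_0,j_0}=(\delta_{i_0})_{j_0}$ with $j_0\ge 1$ vanishes whenever $\delta_{i_0}\in \F_p$ (which includes $\delta_{i_0}=0$), so a switching correction can never cancel a nonzero $\delta_{i_0}$. A short case split on whether $i_0<i^*$ (the switching correction is absent), $i_0=i^*$ (the effective coefficient $\delta_{i^*}+\eta_{i^*,j_0}$ remains nonzero and the full pattern of $\vv_{i^*}$ is balanced on $B_{i^*,j_0}$), or $i_0>i^*$ (the correction is constant on each $q^{i^*}$-sub-block of $B_{i_0,j_0}$ and merely shifts the already-balanced $w$) then shows that $w+\eta$ is balanced on each $q^{i^*}$-block, so each element of $\F_q$ appears in $r_1-r_2$ exactly $q^{h-1}=\lambda$ times.
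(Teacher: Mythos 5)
Your proposal is correct and follows essentially the same route as the paper: the same multi-level switching of the Sylvester code $S^h$ by block-supported vectors $\vg$ attached to the $\F_p$-coordinates of the coset labels, with the same rank count $h+1+\sum_i s_i$ and the same identification $\K(F_H)=K$. The only difference is that you explicitly verify the two claims the paper leaves as "easy to see" (the balancedness of row differences on $q^{i^*}$-blocks, and the converse inclusion $\K(F_H)\subseteq K$ via the coordinate identity $\alpha\beta^{(i)}_j=(\alpha\beta^{(i)})_j$), and these verifications are sound.
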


\begin{proof}
We consider the vectors $\vv_1, \ldots, \vv_{h}$ of length $n=q^h$ defined in the proof of \Cref{constructionSwitch}. Let $K$ be the linear subcode of $S^h$ generated by the vectors $\vv_3,\ldots, \vv_h$. In this case, the rows of $S^h$ can be partitioned into $q^2$ cosets of $K$, that is, $S^h=\cup_{\beta, \gamma\in \F_q} (K+\beta \vv_1+\gamma \vv_2 )$.
Let $\beta, \gamma \in \F_q$ be $\beta=(\beta_0,\ldots,\beta_{e-1})$ and $\gamma=(\gamma_0,\ldots,\gamma_{e-1})$, respectively, represented as vectors in $\F_{p}^e$.  Then, we construct the matrix
\begin{equation}\label{eq:3.8}
    H^{(s_1,s_2)}=\bigcup_{\beta,\gamma\in \F_q} K_{\beta,\gamma},
\end{equation}
where $s_1,s_2 \in \{1,\ldots,e-1\}$, $K_{\beta,\gamma} = K + \beta \vv_1 + \sum_{j=1}^{s_1} \beta_j \vg_j^{(1)} +\gamma \vv_2 + \sum_{j=1}^{s_2} \gamma_j \vg^{(2)}_j$. We take the vector $\vg_j^{(1)}=\vg_j$ defined as in the proof of \Cref{constructionSwitch} for $j\in \{1,\ldots,e-1\}$. 
The vector $\vg^{(2)}_j$ has exactly the values $\zero_{2}$, $\one_{2}$, $\bfomega_{2}^{(1)}$, $\ldots$, $\bfomega_{2}^{(q-2)}$ in the coordinate positions $(jq^2+1), \ldots, (j+1)q^2$, respectively, and  zeros elsewhere, for all $j\in \{1,\ldots, e-1\}$.

We can repeat again and again the above construction taking in the $z$-th round vectors $\vg_j^{(z)}$ with exactly the values $\zero_{z}$, $\one_{z}$, $\bfomega_{z}^{(1)}$, $\ldots$, $\bfomega_{z}^{(q-2)}$ in the coordinate positions $(jq^z+1), \ldots, (j+1)q^z$, respectively, and  zeros elsewhere, for all $j\in \{1,\ldots, e-1\}$. Again, there are always enough coordinates since $e \leq q=p^e$. We can follow this process until we consider the linear subcode $K=\langle \vv_h \rangle$, and we obtain the matrix $H^{(s_1,s_2,\ldots,s_{h-1})}$.

Next, we prove that the matrix $H=H^{(s_1,s_2,\ldots,s_{h-k+1})}$ is a $\gh$ matrix, and the corresponding code is $\F_p$-additive. First of all, the length and the number of rows of $H$ is $q^h$. The corresponding code $C_H$ is an $\F_p$\nobreakdash-additive code over $\F_{q}$ as in the proof of \Cref{constructionSwitch}. Finally, it can be easily seen that $H$ is a $\gh$ matrix by computing the differences between any two different rows.

Again, by \Cref{lemm:3.1}, it is easy to see that $\K(H)=K$, so we have that  $\ker(C_H)=h-(h-k+2)+1=k-1+1=k$ by \Cref{lemm:1}. For the $\rank(C_H)$, it is easy to see that all linearly independent vectors in $C_{S^h}$ are also in $\langle C_H \rangle$. Apart from that vectors, we also find in $\langle C_H \rangle$ the linearly independent vectors $\vg_1^{(1)},\ldots, \vg_{s_1}^{(1)}, \ldots, \vg_1^{(h-k+1)},\ldots,\vg_{s_{h-k+1}}^{(h-k+1)}$. Hence, $\rank(C_H)=h+1+s_1+\cdots+s_{h-k+1}$, which covers all values in the range from $h+1+(h-k+1)=2h-k+2$ to $h+1+(h-k+1)(e-1)=1+t-(e-1)(k-1)$.
\end{proof}

\section{New constructions with kernel of dimension 1}
\label{sec:constKernel1}

In this section, two new constructions of $\F_p$-additive $\gh$ codes having a kernel of dimension $1$, one with maximum rank and another one with minimum rank, are presented. In \Cref{sec:CombConst}, these constructions together with the Kronecker and switching constructions presented in \Cref{sec:Kronecker} will be used to construct $\F_p$-additive $\gh$ codes having different ranks and dimensions of the kernel.
 
\medskip
First, we introduce a new construction of $\gh$ matrices which allows us to guarantee that the obtained code $C_H$ of length $n=p^t$ is $\F_p$\nobreakdash-additive over $\F_{p^e}$, has kernel of minimum dimension 1 and maximum rank $t+1$.

\begin{proposition}\label{construction_t}
For $q=p^e$, $p$ prime, and any $t >e>1$,
there exists a $\gh$ matrix $H(p^e, p^{t-e})$ such that $C_H$  is an $\F_p$\nobreakdash-additive code over $\F_{p^e}$ of length $n=p^t$ with $\ker(C_H)=1$ and $\rank(C_H)=t+1$.
\end{proposition}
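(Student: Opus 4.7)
The plan is to represent the code $C_H$ as a space of affine $\F_p$-linear functions on $\F_p^t$. Identify the $p^t$ coordinates of vectors in $\F_q^{p^t}$ with the elements $\vc\in\F_p^t$, and for each $(\vb,b)\in\F_q^t\times\F_q$ define the codeword $f_{\vb,b}\in\F_q^{p^t}$ by
\[
f_{\vb,b}(\vc)\;=\;\sum_{i=1}^{t} b_i c_i \,+\, b,
\]
using the natural embedding $\F_p\hookrightarrow\F_q$. Given an $\F_p$-linear subspace $U\subseteq\F_q^t$ of $\F_p$-dimension $t$, I would set $C=\{f_{\vb,b}:\vb\in U,\ b\in\F_q\}$, so that $|C|=p^{e+t}$ and $C$ is $\F_p$-additive. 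Writing $f_{\vb,b}=\sum b_i\pi_i+b\cdot\one$ in terms of the coordinate projections $\pi_i:\vc\mapsto c_i$, the $\F_q$-span of $C$ is $\langle\pi_1,\ldots,\pi_t,\one\rangle$, which is $\F_q$-linearly independent of dimension $t+1$, giving $\rank(C)=t+1$ automatically. By \Cref{lemm:3.1}, $\K(C)=\{f_{\vb,b}:\vb\in V_0,\ b\in\F_q\}$ where $V_0:=\{\vb\in U:\F_q\vb\subseteq U\}$, so $\ker(C)=1+\dim_{\F_q}V_0$.

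Next I would observe that the $\gh$ balance property of $C$ translates cleanly into a condition on $U$: for each nonzero $\vb\in U$, the $\F_p$-linear map $\vc\mapsto\sum b_ic_i$ sends $\F_p^t$ onto $\F_q$ iff the $\F_p$-span of $\{b_1,\ldots,b_t\}$ is all of $\F_q$, and surjectivity is exactly the balance condition (each value of $\F_q$ attained $p^{t-e}=\lambda$ times). Thus the construction reduces to finding $U\subseteq\F_q^t$ of $\F_p$-dimension $t$ satisfying (A) every nonzero $\vb\in U$ has $\langle b_1,\ldots,b_t\rangle_{\F_p}=\F_q$, and (B) $V_0=\{0\}$.

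I would then exhibit such a $U$ explicitly by taking the $\F_p$-span of $t$ ``Moore-type'' vectors whose coordinates are carefully chosen powers of a primitive element $\omega\in\F_q$. Condition (A) reduces to showing that no nonzero $\F_p$-combination of the chosen columns has all its entries in a common $\F_p$-hyperplane $H\subset\F_q$; since there are only $(q-1)/(p-1)$ such hyperplanes and each $H^t$ has $\F_p$-codimension $t$ in $\F_q^t$, a generic choice avoids all of them, and a concrete choice can be verified by evaluating non-vanishing of suitable Moore determinants. Condition (B) reduces to checking that $\omega\vb\notin U$ for every nonzero $\vb\in U$, which is an additional finite condition on the same data.

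The main obstacle is the simultaneous verification of (A) and (B) for all $t>e$. For small cases such as $e=2$, $t=3$, one can perform the verification directly on a concrete basis of $U$; in general, one may proceed either by a dimension-counting argument showing that the ``bad'' $U$'s (violating (A) or (B)) form a proper subvariety of the relevant $\F_p$-Grassmannian (so that a good $U$ exists and in fact is generic), or by an explicit Moore/Gabidulin-type construction whose properties are certified by classical results in rank-metric coding theory. The dimensions $\dim_{\F_p}U=t$ and $\dim_{\F_p}\F_q^t=et$ leave ample room to dodge both the finitely many hyperplane-products $H^t$ and the finitely many $\F_q$-lines through each nonzero vector once $t>e\geq 2$.
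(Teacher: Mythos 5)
Your functional reformulation is sound, and it is in fact equivalent to the paper's construction: the paper takes exactly your code $C$ for the explicit choice $U=\{\,\vc\mapsto \pi(x\vc) : x\in\F_{p^t}\,\}$, where the coordinate set is identified with $\F_{p^t}\cong\F_p^t$ and $\pi:\F_{p^t}\to\F_{p^e}$ is the $\F_p$-linear projection onto the first $e$ coordinates; your condition (A) then holds trivially because multiplication by $x\neq 0$ is a bijection of $\F_{p^t}$ and $\pi$ is surjective. The genuine gap in your write-up is that you never actually produce a $U$ satisfying (A) and (B) -- which is precisely the content of the proposition -- and neither of your two suggested routes is carried out. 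The closing ``ample room'' heuristic is not a proof: condition (A) forces $U$ to be a maximum-size rank-metric code ($t$-dimensional over $\F_p$ inside the $e\times t$ matrices with every nonzero element of rank $e$ meets the rank-metric Singleton bound with equality), so ``bad'' $U$'s are not avoided by a naive union bound over the $(q-1)/(p-1)$ hyperplane products $H^t$; and condition (B), ruling out every $\F_q$-line inside $U$, is the delicate point that the paper settles by a concrete computation. Specifically, writing $b_0+b_1x+\cdots+b_{t-1}x^{t-1}-x^t$ for the primitive polynomial of $\omega\in\F_{p^t}$, the paper shows that in any nonzero row the only consecutive entries $0,\gamma$ with $\gamma\neq0$ satisfy $\gamma=u(b_0+b_1\alpha+\cdots+b_{e-1}\alpha^{e-1})$ with $u\in\F_p$, so multiplying a row by any $\alpha^j\notin\F_p$ leaves the code, and \Cref{lemm:3.1} gives kernel of dimension $1$. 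Without an explicit $U$ (or a completed genericity argument), your proposal establishes only a framework, not the existence claim.

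A secondary error: the rank is not ``automatic.'' One has $\rank(C)=1+\dim_{\F_q}\langle U\rangle_{\F_q}$, and $\langle U\rangle_{\F_q}=\F_q^t$ follows neither from $\dim_{\F_p}U=t$ nor obviously from (A); it is an additional condition on $U$ that must be verified. Note also that for $e\geq3$ the general bounds of \Cref{bounds-a} with $k=1$ only give $\rank\geq\lceil(e+t-1)/(e-1)\rceil<t+1$, so maximality of the rank genuinely requires an argument; the paper obtains it by showing that for its $U$ the intersection of each $\F_q$-line $\F_q\vv$ with the code is only the $\F_p$-line $\F_p\vv$, which forces at least $t$ independent generators for the cosets of the kernel and hence $\rank(C_H)\geq t+1$.
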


\begin{proof}
Let $S_{p^t}=H(p^t,1)$ be the $\gh$ matrix,  given by the multiplicative table of $\F_{p^t}$, that is, the matrix
\begin{equation}\label{taula}
H(p^t,1)=\left(
\begin{array}{cccccc}
0&0&0&\ldots&0&0\\
0&1&\omega&\ldots&\omega^{p^t-3}&\omega^{p^t-2}\\
0&\omega&\omega^2&\ldots&\omega^{p^t-2}&1\\
\ldots&\ldots&\ldots&\ldots&\ldots&\ldots\\
0&\omega^{p^t-2}&1&\ldots&\omega^{p^t-4}&\omega^{p^t-3}
\end{array}
\right),
\end{equation}
where $\omega$ is a primitive element in $\F_{p^t}$. Let $b_0+b_1x+\cdots+b_{t-1}x^{t-1} -x^t\in \F_{p}[x]$ be the primitive polynomial of $\omega$ and note that $b_0\not= 0$.
In this case, $C_H$ is a linear code over $\F_{p^t}$ and an $\F_p$\nobreakdash-additive code. By \Cref{bounds-a}, we have that $\rank_p(C_H)=\ker_p(C_H)=\rank(C_H)=\ker(C_H)=2$.

Now, for any $e$, $1<e<t$, we consider the projection map from $\F_{p^t}$ to $\F_{p^e}$
given by $$\vv=(v_1,\ldots,v_e,v_{e+1},\ldots,v_t)\in \F_{p^t} \longrightarrow \overline{\vv}=(v_1,\ldots,v_e) \in \F_{p^e}.$$ Note that
we can consider that the projection of $\omega \in \F_{p^t}$ gives a primitive element $\overline{\omega} =\alpha \in \F_{p^e}$.
Let $H_e$ be the matrix obtained from $H$ after changing each entry $\vv$ by $\overline{\vv}$. Since in any row of $H$ there are all the elements in $\F_{p^t}$, it is easy to see that in any row of $H_e$ there will be all the elements in $\F_{p^e}$, but each one repeated $\lambda=p^{t-e}$ times. The same happens taking the difference of any two different rows in $H$. Hence, $H_e(p^e,p^{t-e})$ is a $\gh$ matrix. Since $C_H$ is an $\F_p$\nobreakdash-additive code it is easy to see, by construction, that $C_{H_e}$ is also an  $\F_p$\nobreakdash-additive code.

Let $H_e^{(r)}$ be the matrix $H_e$ after removing the first row and column. To show that the dimension of the kernel $\ker(F_{H_e})$ is zero (or equivalently, $\ker(C_{H_e})=1$), we begin by noting that in any row of $H_e^{(r)}$ the only possible consecutive entries $0,\gamma$, where $\gamma \not=0$, are those where $\gamma=u(b_0+b_1\alpha+\cdots+b_{e-1}\alpha^{e-1})$ and $u\in \F_p$.
Therefore, multiplying any row of $H_e^{(r)}$ by $\alpha^j$, for any $\alpha^j \notin \F_p$, we do not obtain a row of $H_e^{(r)}$. Now, the statement about the kernel is clear from \Cref{lemm:3.1}.

For the rank, we can improve the lower bound given in \Cref{bounds-a}. From the previous paragraph, for any $v\in C_H \backslash \K(C_H)$, the intersection of the linear space over $\F_{p^e}$ generated by $v$ and
$C_H$ is of dimension 1 over $\F_p$. Hence, the number of independent vectors over $F_{p^e}$ generating the $p^{e+t-ek}$ cosets of $\K(C_H)$ over $C_H$ is lower bounded by $e+t-ek=t$ and so $t+k =t+1\leq \rank(C_H)$. Finally, from item $(iv)$ of \Cref{bounds-a} we obtain the statement.
\end{proof}

\begin{example}
We construct a $\gh$ matrix $H(2^2,2)$ such that $C_H$ is an $\F_2$\nobreakdash-additive code
over $\F_{2^2}$ of length $n=2^3$ with $\ker(C_H)=1$ and $\rank(C_H)=4$.
We begin with the $\gh$ matrix $H(2^3,1)$ given by the multiplicative table of $\F_{2^3}$, that is,
\begin{equation}\label{231}
H(2^3,1)=\left(
\begin{array}{cccccccc}
0&0&0&0&0&0&0&0\\
0&1&\omega&\omega^2&\omega^3&\omega^4&\omega^5&\omega^6\\
0&\omega&\omega^2&\omega^3&\omega^4&\omega^5&\omega^6&1\\
0&\omega^2&\omega^3&\omega^4&\omega^5&\omega^6&1&\omega\\
0&\omega^3&\omega^4&\omega^5&\omega^6&1&\omega&\omega^2\\
0&\omega^4&\omega^5&\omega^6&1&\omega&\omega^2&\omega^3\\
0&\omega^5&\omega^6&1&\omega&\omega^2&\omega^3&\omega^4\\
0&\omega^6&1&\omega&\omega^2&\omega^3&\omega^4&\omega^5\\
\end{array}
\right),
\end{equation}
where $\omega$ is a primitive element in $\F_{2^3}$ and $\omega^3=\omega+1$,
Next, we write each entry of (\ref{231}) by using coordinates over $\F_2$ and projecting them over $\F_{2^2}$.   Note that $\bzero=\overline{(0,0,0)}=(0,0)=0$, $\bone=\overline{(1,0,0)}=(1,0)=1$, $\ow=\overline{(0,1,0)}=(0,1)=\alpha$, $\ow^2=\overline{(0,0,1)}=(0,0)=0$, $\ow^3=\overline{(1,1,0)}=(1,1)=\alpha^2$, $\ow^4=\overline{(0,1,1)}=(0,1)=\alpha$, $\ow^5=\overline{(1,1,1)}=(1,1)=\alpha^2$, $\ow^6=\overline{(1,0,1)}=(1,0)=1$, where $\alpha$ is a primitive element in $\F_{2^2}$ and $\alpha^2=\alpha+1$. Finally, by the proof of \Cref{construction_t}, we obtain the following $\gh$ matrix $H(2^2,2)$:
\begin{equation}
H(2^2,2)=\left(
\begin{array}{cccccccc}
0&0&0&0&0&0&0&0\\
0&1&\alpha&0&\alpha^2&\alpha&\alpha^2&1\\
0&\alpha&0&\alpha^2&\alpha&\alpha^2&1&1\\
0&0&\alpha^2&\alpha&\alpha^2&1&1&\alpha\\
0&\alpha^2&\alpha&\alpha^2&1&1&\alpha&0\\
0&\alpha&\alpha^2&1&1&\alpha&0&\alpha^2\\
0&\alpha^2&1&1&\alpha&0&\alpha^2&\alpha\\
0&1&1&\alpha&0&\alpha^2&\alpha&\alpha^2\\
\end{array}
\right).
\end{equation}
\end{example}

\medskip
By \Cref{bounds-a}, for $q=p^2$ (that is, for $e=2$), we have that the rank of an $\F_p$-additive $\gh$ code $C_H$ of length $n=p^2$ (that is, with $t=e=2$) has to be $3$ if the dimension of the kernel is $1$. The next result shows that there exists a code $C_H$ with these parameters for any $p$ prime and $p \not =2$. 
Therefore, in this case, we have that the given lower bound for the rank, once the dimension of the kernel is fixed, coincides with the upper bound. 

Note that, for $q=4$, it is well known that there is only one $\gh$ matrix $H(4,1)$, up to equivalence, which gives the linear code $C_H$ of length $n=4$ over $\F_4$, having rank and dimension of the kernel equal to $2$. 

\begin{proposition}\label{prop:r3k1}
For $t=e=2$ and $p$ an odd prime, there exists a $\gh$ matrix $H(p^2,1)$ such that $C_H$ is an $\F_p$-additive code over $\F_{p^2}$ of length $n=p^2$ with $\ker(C_H)=1$ and $\rank(C_H)=3$.
\end{proposition}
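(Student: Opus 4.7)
The plan is to construct $H$ explicitly. Let $\omega$ be a primitive element of $\F_{p^2}$, and let $\vb \in \F_{p^2}^{p^2}$ be the vector whose $b$-th component (indexed by $b \in \F_{p^2}$) equals $b$; let $\vb^p$ denote the componentwise $p$-th power. I will take the rows of $H$ to be the $p^2$ vectors $r_{\alpha,\beta} = \alpha \vb + \beta \omega \vb^p$ for $(\alpha,\beta) \in \F_p^2$, placing $r_{0,0}=\zero$ first so that $H$ is normalized. By construction $F_H$ is an $\F_p$-linear subspace of $\F_{p^2}^{p^2}$, which will automatically make $C_H = F_H + \F_{p^2}\one$ an $\F_p$-additive code.

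To verify that $H$ is a $\gh$ matrix $H(p^2,1)$, it suffices (since all row differences lie in the $\F_p$-additive set $F_H$) to show that every nonzero $r_{\alpha,\beta}$ takes each value of $\F_{p^2}$ exactly once, equivalently that the $\F_p$-linear map $L_{\alpha,\beta}\colon b \mapsto \alpha b + \beta \omega b^p$ is a bijection of $\F_{p^2}$ for every $(\alpha,\beta) \in \F_p^2 \setminus \{(0,0)\}$. The subcases $\alpha = 0$ or $\beta = 0$ are immediate. In the remaining case, a kernel element $b \neq 0$ of $L_{\alpha,\beta}$ would satisfy $b^{p-1} = -\alpha\beta^{-1}\omega^{-1} = c\omega^{-1}$ for some $c \in \F_p^*$; since $b^{p-1}$ lies in the subgroup of $(p+1)$-th roots of unity in $\F_{p^2}^*$ (the image of $b \mapsto b^{p-1}$), and
$$
(c\omega^{-1})^{p+1} = c^{p+1}\omega^{-(p+1)} = c^2\omega^{-(p+1)}
$$
(using $c^p = c$), such membership would force $c^2 = \omega^{p+1}$. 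The key arithmetic observation is that $\omega^{p+1}$ has order $(p^2-1)/(p+1) = p-1$ in $\F_{p^2}^*$, hence generates $\F_p^*$ and is therefore a non-square in $\F_p^*$ (this is where the hypothesis $p$ odd enters); this delivers the contradiction.

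For the rank and kernel, I observe that $F_H \cap \F_{p^2}\one = \{\zero\}$ (a nonzero constant vector is not a permutation), so $|C_H| = p^4$. By \Cref{coro:bounds-a}(i), $\rank(C_H)+\ker(C_H) = 2+t = 4$, and by \Cref{bounds-a}(iv), if $C_H$ is not $\F_{p^2}$-linear then $\ker(C_H) \le \lfloor t/e \rfloor = 1$. It thus suffices to rule out $C_H$ being $\F_{p^2}$-linear, which I will do by showing $\omega \vb \notin C_H$: otherwise the identity $\omega x = ax + b\omega x^p + c$ would hold for all $x \in \F_{p^2}$ with $a,b \in \F_p$ and $c \in \F_{p^2}$; both sides are polynomials of degree $<p^2$ so must agree as polynomials, forcing $b = 0$ from the coefficient of $x^p$ and then $a = \omega$ from the coefficient of $x$, contradicting $\omega \notin \F_p$. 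Hence $\ker(C_H)=1$ and $\rank(C_H)=3$.

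The main obstacle is the bijectivity analysis for $L_{\alpha,\beta}$ in the second paragraph, which is exactly where the hypothesis that $p$ be odd is used, via the non-square property of the norm $\omega^{p+1}$ in $\F_p^*$.
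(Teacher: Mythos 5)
Your proof is correct and follows essentially the same route as the paper: you build $H$ as the $\F_p$-span of the identity vector and an $\omega$-twist of its Frobenius image (the paper uses the conjugate twist $\omega^{p}\vv_1^{(p)}$, which is the same construction up to relabelling), and you verify the generalised Hadamard property by showing each nonzero row is a permutation of $\F_{p^2}$. The only cosmetic differences are that you settle the key injectivity step by noting that $\omega^{p+1}$ generates $\F_p^*$ and hence is a non-square for odd $p$, where the paper derives an equivalent contradiction from a congruence modulo the $2$-part of $p+1$, and that you establish nonlinearity by comparing polynomial coefficients rather than inspecting a single column.
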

\begin{proof} 
First, we construct a $\gh$ matrix $H(p^2,1)$ and then we prove that it fulfils the conditions of the statement. Let $\vv_1=(0,\omega^0,\omega^1,\ldots,\omega^{p^2-2})$, where $\omega$ is a primitive element in $\F_{p^2}$. Hence, $\vv_1$ has a zero in the first position and $\omega^{i-1}$ in the $(i+1)$th position, for $i\in \{1,2,\ldots,p^2-1\}$. Now, we define $\vv_2$ as the vector having a zero in the first position and $\omega^{ip}$ in the $(i+1)$th position. We construct the matrix $H(p^2,1)$ having as rows all linear combinations over $\F_p$ of $\vv_1$ and $\vv_2$. We permute the rows in order to have the all-zero row as the first one. Note that $H(p^2,1)$ is a matrix of order $p^2$ over $\F_{p^2}$, which has all zeros in the first row and column. 

The vector $\vv_1$ has no two positions with the same value. The same is true for $\vv_2$  (indeed, the coordinates of $\vv_2$ correspond to the image of the Frobenius automorphism $x\rightarrow x^p$). The elements of $\F_{p^2}$ which are in $\F_p$ are of the form $\omega^{\lambda(p+1)}$, for $\lambda\in \{0,1,\ldots,p-1\}$. Hence, the $(i+1)$th position of any row in $H(p^2,1)$ is of the form $\omega^{\lambda(p+1)} \omega^{i-1} +\omega^{\gamma(p+1)}\omega^{ip}$, for  $i\in \{1,2,\ldots,p^2-1\}$ and  $\lambda,\gamma\in \{0,1,\ldots,p-1\}$. The coordinates of any row of $H(p^2,1)$ are all different, otherwise we would have two indexes $i,j$ such that
$$
\omega^{\lambda(p+1)} \omega^{i-1} +\omega^{\gamma(p+1)}\omega^{ip}=\lambda^{\gamma(p+1)} \omega^{j-1} +\omega^{\gamma(p+1)}\omega^{jp}
$$
or, equivalently, $\omega^{\lambda(p+1)-1}(\omega^i-\omega^j)=\omega^{\gamma(p+1)}(\omega^{jp}-\omega^{ip})$. Note that $(\omega^{jp}-\omega^{ip})=(-1)^p (\omega^i-\omega^j)^p$. Therefore, simplifying, there would exist $\delta\in  \{0,1,\ldots,$ $p-1\}$ such that $\omega^{\delta(p+1)-1}=(\omega^i-\omega^j)^{p-1}$. Raising this equality to $p+1$, we obtain  $(p+1)(\delta(p+1)-1)\equiv 0 \pmod{p^2-1}$. Reducing modulo $2\xi$, where $\xi$ is the highest power of $2$ dividing $p+1$, we obtain $p+1\equiv 0 \pmod{2\xi}$, which contradicts the definition of $\xi$. This proves that $H(p^2,1)$ is a $\gh$ matrix. 

Finally, by \Cref{bounds-a}, we just need to prove that $H(p^2,1)$ is nonlinear. We see that multiplying $\vv_1$ by $\omega^p$, we obtain a vector which is not any row of the matrix. It is enough to focus on the second column of the matrix. Assume the contrary, that is, there exist  $\lambda,\gamma\in \{0,1,\ldots,p-1\}$ such that $\omega^p = \omega^{\lambda(p+1)} +\omega^{\gamma(p+1)}\omega^{p}$  or, equivalently, $\omega^{\lambda(p+1)}=(1-\omega^{\gamma(p+1)})\omega^{p}$. Since $\omega^{\lambda(p+1)}$ and $1-\omega^{\gamma(p+1)} \in \F_p$, we would have that $\omega^p\in \F_p$, which is a contradiction. %This proves the statement.
\end{proof}

\begin{example} 
We construct a $\gh$ matrix $H(9,1)$ such that $C_H$ is an $\F_3$\nobreakdash-additive code
over $\F_{9}$ of length $n=9$ with $\ker(C_H)=1$ and $\rank(C_H)=3$.
Let $\vv_1=(0,\omega^0,\omega,\omega^2,\omega^3,\omega^4,\omega^5,\omega^6,\omega^{7})$ and $\vv_2=(0,\omega^3,\omega^6,\omega,\omega^4,\omega^7,\omega^2,\omega^5,\omega^{0})$, where $\omega$ is a primitive element in $\F_{3^2}$ and $\omega^3=\omega+2$. By the proof of \Cref{prop:r3k1}, we obtain the matrix $H(9,1)$ as the matrix having as rows all linear combinations over $\F_3$ of $\vv_1$ and $\vv_2$, that is,
\begin{equation}
H(3^2,1)=\left(
\begin{array}{ccccccccc}
0&0&0&0&0&0&0&0&0\\
0& 1& \omega& \omega^2& \omega^3& \omega^4& \omega^5& \omega^6& \omega^7 \\ 
0& \omega^3& \omega^6& \omega& \omega^4& \omega^7& \omega^2& \omega^5& 1 \\ 
0& \omega^4& \omega^5& \omega^6& \omega^7& 1& \omega& \omega^2& \omega^3 \\ 
0& \omega^7& \omega^2& \omega^5& 1& \omega^3& \omega^6& \omega& \omega^4 \\ 
0&  \omega &   \omega^3& 1& \omega^6&  \omega^5& \omega^7& \omega^4& \omega^2\\
0& \omega^5& \omega^7&  \omega^4& \omega^2&   \omega& \omega^3&   1& \omega^6\\
0& \omega^6&  \omega^4& \omega^3& \omega^5& \omega^2&   1& \omega^7&   \omega\\
0& \omega^2&  1& \omega^7&   \omega& \omega^6&  \omega^4& \omega^3& \omega^5\\
\end{array}
\right).
\end{equation}
\end{example}

\medskip
By \Cref{bounds-a}, for $q=p^3$ (that is, for $e=3$), we have that the rank of an $\F_p$-additive $\gh$ code $C_H$ of length $n=p^3$ (that is, with $t=e=3$) is $3$ or $4$ if the dimension of the kernel is $1$. However, by computer search, we found that there is not any $\F_2$-additive $\gh$ matrix $H(8,1)$ with $C_H$ of rank $3$, and they do exist with rank $4$, for example the one given in \Cref{ExH8rank4}. Therefore, the lower bound given by \Cref{bounds-a} is not always tight.  

\begin{example} \label{ExH8rank4}
The following $\gh$ matrix $H(2^3,1)$ over $\F_{2^3}$ generates a $\F_2$-additive $\gh$ code $C_H$ of length $n=2^3$ with rank equal to $4$ and a kernel of dimension $1$, where $\omega$ is a primitive element in $\F_{2^3}$ and $\omega^3=\omega+1$. 
\begin{equation}
H(2^3,1)=\left(
\begin{array}{cccccccc}
0&0&0&0&0&0&0&0\\
0 & \omega^2 & \omega^5 & \omega^3 & 1 & \omega & \omega^6 & \omega^4 \\
0 & \omega^3 & \omega & 1 & \omega^6 & \omega^2 & \omega^4 & \omega^5 \\ 
0 & 1 & \omega^2 & \omega^6 & \omega^4 & \omega^3 & \omega^5 & \omega \\ 
0 & \omega^5 & \omega^6 & a & \omega^2 & \omega^4 & \omega^3 & 1 \\ 
0 & \omega^6 & \omega^3 & \omega^4 & \omega^5 & 1 & \omega & \omega^2 \\
0 & \omega & \omega^4 & \omega^2 & \omega^3 & \omega^5 & 1 & \omega^6 \\ 
0 & \omega^4 & 1 & \omega^5 & \omega & \omega^6 & \omega^2 & \omega^3\\ 
\end{array}
\right).
\end{equation}
\end{example}

For $t=e=4$, we have found that there are $\gh$ matrices $H(p^4,1)$ for $p=3$ and $p=5$ such that $C_H$ are $\F_p$-additive codes with minimum dimension of the kernel $\ker(C_H)=1$ and minimum rank $\rank(C_H)=3$, given by the following two examples. We have checked computationally that the technique used in these two examples does not apply for $p=7$ and $p=11$. 

\begin{example} \label{ExHq3rank3}
Let $H(3^4,1)$ be the matrix having as rows all linear combinations over $\F_3$ of $\vv_1$, $\omega\vv_1$, $\vv_2$ and $\omega\vv_2$, where $\vv_1=(0,\omega^0, \omega^1, \ldots, \omega^i, \ldots, \omega^{79})$,
$\vv_2=(0,\omega^2, \omega^{11}, \ldots, \omega^{2+9i}, \ldots, \omega^{73})$, and $\omega$ is a primitive element in $\F_{3^4}$. We can check that it is a $\gh$ matrix. Clearly, by construction, $C_H$ is a $\F_3$-additive code with $\rank(C_H)=3$. By \Cref{lemm:3.1}, we have that $\ker(C_H)=1$. 
\end{example}

\begin{example} \label{ExHq5rank3}
Let $H(5^4,1)$ be the matrix having as rows all linear combinations over $\F_5$ of $\vv_1$, $\omega\vv_1$, $\vv_2$ and $\omega\vv_2$, where $\vv_1=(0,\omega^0, \omega^1, \ldots, \omega^i, \ldots, \omega^{623})$,
$\vv_2=(0,\omega^6, \omega^{31}, \ldots, \omega^{6+25i}, \ldots, \omega^{605})$, and $\omega$ is a primitive element in $\F_{5^4}$. We can check that it is a $\gh$ matrix. Clearly, by construction, $C_H$ is a $\F_5$-additive code with $\rank(C_H)=3$. By \Cref{lemm:3.1}, we have that $\ker(C_H)=1$. 
\end{example}

\section{Combining different constructions}
\label{sec:CombConst}

In this section, we use the constructions of $\F_p$-additive $\gh$ codes given in \Cref{sec:Kronecker,sec:constKernel1}, to show the existence of such codes having different ranks between the lower and upper bounds found in \Cref{sec:bounds}, for a fixed dimension of the kernel. We also see that it is only possible to construct codes for all allowable pairs rank and dimension of the kernel when $e=2$, by using the above constructions. For $e\geq 3$, mainly it is still necessary to prove the existence of $\F_p$-additive $\gh$ codes with a kernel of dimension $1$ and not having the maximum rank.  

\medskip
First, in the next theorem, for $q=p^e$, $p$ prime and any $t> e>1$, we prove that there is an $\F_p$-additive $\gh$ code $C_H$ over $\F_q$ of length $n=p^t$ with $\ker(C_H)=k$  for each possible value of $k$ given by item $(iv)$ of \Cref{bounds-a}. These codes are the ones having the maximum rank, that is, satisfying the upper bound given by item $(iii)$ of \Cref{bounds-a}. This proves that this upper bound for the rank, once the dimension of the kernel is fixed, is tight for all cases with $t>e>1$.  

Note that when $t=e>1$, there is an $\F_p$-additive $\gh$ matrix $H(q,1)$, given by the multiplicative table
of $\F_q$, so the corresponding $\gh$ code $C_H=C_{S_q}$ of length $n=p^e=q$ is linear over $\F_q$ and $\ker(C_H)=\rank(C_H)=2$ \cite{tonchev}. According to \Cref{bounds-a}, in this case, there could be $\F_p$-additive $\gh$ codes $C_H$ having $\ker(C_H)=1$ and $\rank(C_H)\in \{3,\ldots,1+e \}$. By \Cref{prop:r3k1}, for $t=e=2$, there exist such codes having rank $3$. However, it is still an open problem to prove their existence when $q=p^e$ and $e\geq 3$ (except for $q=2^3$, $q=3^4$ and $q=5^4$; by \Cref{ExH8rank4,ExHq3rank3,ExHq5rank3}, respectively), even in general for $\gh$ codes which are not necessarily $\F_p$-additive. 
These are connected to Latin squares of order $q-1$ and we could use this approach to construct them.

\begin{theorem} \label{kernel_additiu} For $q=p^e$, $p$ prime, and any $t> e>1$, there exists an  $\F_p$\nobreakdash-additive $\gh$ code $C_H$ over $\F_q$ of length $n=p^t$  with $\ker(C_H)=k$
if and only if
\begin{enumerate}[(i)]
\item $k\in \{1, \ldots, \lfloor t/e \rfloor \}$ when $e \nmid t$,
\item $k\in \{1, \ldots, t/e +1 \}$, otherwise.
\end{enumerate}
Moreover, $\rank(C_H)=1+t-(e-1)(k-1)$.
\end{theorem}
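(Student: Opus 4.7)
The plan is to handle the two directions separately. The necessity (only if) direction is immediate from \Cref{bounds-a}: item (iv) gives that $\ker(C_H)\in\{1,\ldots,\lfloor t/e\rfloor\}$ when $e\nmid t$, and $\ker(C_H)\in\{1,\ldots,t/e+1\}$ when $e\mid t$ (the largest value being attained exactly when the code is linear over $\F_q$); item (iii) bounds $\rank(C_H)\leq 1+t-(e-1)(k-1)$. So for sufficiency I only need to construct, for each allowable $k$, an $\F_p$-additive $\gh$ code whose rank achieves this upper bound.

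The crux is a bookkeeping observation about the Kronecker sum from \Cref{lem:Kro1}: if $C_{H_0}$ has length $p^{t_0}$, kernel $k_0$, and rank $r_0=1+t_0-(e-1)(k_0-1)$, then $S^a\oplus H_0$ has length $p^{t_0+ae}$, kernel $k_0+a$, and rank $r_0+a$, and these new parameters once again satisfy $r=1+(\text{length exponent})-(e-1)(\ker-1)$. Thus taking Kronecker sums with linear Sylvester matrices preserves maximality of the rank, which lets me bootstrap from small-length base cases to arbitrary $t$.

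Write $t=he+s'$ with $0\le s'<e$. When $s'=0$ (so $h=t/e\ge 2$), the value $k=h+1$ is realised by the linear Sylvester code $C_{S^h}$; the values $k\in\{2,\ldots,h\}$ by \Cref{constructionSwitch2} choosing the maximum rank in its range; and $k=1$ by \Cref{construction_t}. When $s'>0$ and $h=\lfloor t/e\rfloor\ge 1$, the case $k=1$ is again \Cref{construction_t} applied at length $p^t$. For $k\ge 2$, which forces $h\ge 2$, let $H'$ denote the code from \Cref{construction_t} of length $p^{e+s'}$, kernel $1$, rank $e+s'+1$. Then $S^{h-1}\oplus H'$ realises $k=h$ with the target rank, while for $k\in\{2,\ldots,h-1\}$ (requiring $h\ge 3$) I use $H_1\oplus H'$, where $H_1$ comes from \Cref{constructionSwitch2} with parameters $h-1$ and $k$. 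In every subcase, the formulas of \Cref{lem:Kro1} yield length $p^t$, kernel $k$, and rank exactly $1+t-(e-1)(k-1)$.

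The main obstacle is the $s'>0$ case: \Cref{constructionSwitch2} only produces lengths of the form $q^h$, so it cannot by itself reach length $p^{he+s'}$. This forces the use of \Cref{construction_t} as the ``remainder'' factor of length $p^{e+s'}$, combined via Kronecker sum with a suitable Sylvester matrix, and the boundary values $k=h$ and $k\in\{2,\ldots,h-1\}$ must be handled separately (the latter range being empty when $h=2$). Once the invariant $\rank=1+t-(e-1)(\ker-1)$ is checked to be preserved under Kronecker sums with Sylvester matrices, the remainder is a routine case analysis.
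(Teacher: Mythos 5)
Your proposal is correct and follows essentially the same strategy as the paper: necessity from \Cref{bounds-a} (with \Cref{lemm:maxkernel}), and sufficiency by combining the Sylvester codes, the switching constructions, and the \Cref{construction_t} codes via Kronecker sums, exploiting the fact that a Kronecker sum with a Sylvester factor preserves the maximal-rank relation $r=1+t-(e-1)(k-1)$. The only organizational difference is that the paper runs an induction on $t$ in steps of $e$ (adjoining one $S_q$ factor at a time, with base cases at $t\le 2e$ built from \Cref{constructionSwitch}), whereas you unroll this into a direct two-factor decomposition $t=he+s'$ using \Cref{constructionSwitch2}; the two are equivalent.
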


\begin{proof}
The general proof is by induction over $t>e$, in steps of $e$.
In this sense, the first point is to show the existence of  $\F_p$\nobreakdash-additive $\gh$ codes with the following parameters:
\begin{align*}
k=1             & \textrm{ when } t\in \{e+1,\ldots,2e-1 \},\\
k \in\{1,2,3\}  & \textrm{ when } t= 2e.
\end{align*}
When $t=2e$, by \Cref{coro:2.2}, the $\F_p$\nobreakdash-additive $\gh$ code $C_H$ corresponding to 
$S^2=S_q \oplus S_q$ has $k=3$ and $\rank(C_H)=1+t-(e-1)(k-1)=3$. By \Cref{constructionSwitch}, there exists an  $\F_p$\nobreakdash-additive  $\gh$ code $C_H$ over $\F_q$ with $k=2$ and $\rank(C_H)=1+t-(e-1)(k-1)=2+e$. From \Cref{construction_t}, the existence of $\F_p$\nobreakdash-additive  $\gh$ codes with $k=1$ and maximum rank is assured for all $t>e>1$.

By \Cref{coro:2.2} and \Cref{lem:Kro1}, we can recursively construct $\F_p$-additive $\gh$ codes $C_H$ by using the Kronecker sum construction, $H=S_q \oplus B$, where $B$ is an $\F_p$-additive $\gh$  matrix of size $p^{t'}=p^{t-e}$ constructed in the previous step. Note that if $C_B$ of length $p^{t'}$ has a kernel of dimension $k'$ and maximum rank $r'=1+t'-(e-1)(k'-1)$,
then $C_H$ of length $p^t$ with $t=t'+e$ has a kernel of dimension $k=k'+1$ and rank $r=r'+1$. Therefore, $r=1+t'-(e-1)(k'-1) +1 = 1+t-(e-1)(k-1)$ and $C_H$ has maximum rank. This construction covers all the values of $k$ in the statement, except $k=1$. However, \Cref{construction_t} assures the existence of $\F_p$\nobreakdash-additive $\gh$  codes with $k=1$ and maximum rank for all $t>e>1$. Therefore, the existence for all given parameters is proved. 

Finally, by \Cref{bounds-a} and \Cref{lemm:maxkernel}, we have that these are the
only possibilities for the dimension of the kernel of an $\F_p$-additive $\gh$ code.
\end{proof}

When $e=2$, by \Cref{coro:bounds-a}, we have that for each possible dimension of the kernel, there is only one possible rank. Therefore, when $t>e=2$, the above theorem covers all possible pairs $(r,k)$, where $r$ is the rank and $k$ the dimension of the kernel of the $\F_p$-additive $\gh$ code. Moreover, since \Cref{prop:r3k1} covers the case when $t=e=2$, we have the following corollary.

\begin{corollary} \label{cor:3.9} For $q=p^2$, $p$ prime, and any $t\geq 2$, there exists an  $\F_p$\nobreakdash-additive $\gh$ code $C_H$ over $\F_q$ of length $n=p^t$  with $\ker(C_H)=k$ and $\rank(C_H)=r$ 
if and only if $r=t+2-k$ and $k$ satisfies %is any of the allowable parameters given by \Cref{kernel_additiu}.
\begin{enumerate}[(i)]
\item $k\in \{1, \ldots, \lfloor t/e \rfloor \}$ when $e \nmid t$,
\item $k\in \{1, \ldots, t/e +1 \}$, otherwise.
\end{enumerate}
\end{corollary}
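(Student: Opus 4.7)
The plan is to assemble the corollary from three ingredients already established: the identity $r+k=2+t$ that characterises the $e=2$ case, the two-sided range of admissible kernel dimensions, and matching existence constructions. First I would deal with the ``only if'' direction. The equality $r=t+2-k$ is immediate from item $(i)$ of \Cref{coro:bounds-a}, and the two admissible ranges for $k$ — namely $\{1,\dots,\lfloor t/2\rfloor\}$ when $2\nmid t$ and $\{1,\dots,t/2+1\}$ when $2\mid t$ — are exactly what item $(iv)$ of \Cref{bounds-a} and \Cref{lemm:maxkernel} force, with the lower bound $k\ge 1$ coming from \Cref{bounds}.

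For the ``if'' direction I would split on $t$. The case $t>e=2$ is covered verbatim by \Cref{kernel_additiu}: that theorem produces, for each admissible $k$, an $\F_p$-additive $\gh$ code over $\F_{p^2}$ of length $p^t$ with $\ker(C_H)=k$ and $\rank(C_H)=1+t-(e-1)(k-1)=t-k+2$, matching the prescribed $r$. The only case left is $t=e=2$, where the admissible kernels are $k\in\{1,2\}$. For $k=2$, the linear Sylvester code $C_{S_{p^2}}$ of length $p^2$ is $\F_p$-additive with $\rank=\ker=2$, realising $r=t+2-k=2$. For $k=1$, I would appeal to \Cref{prop:r3k1}, which constructs an $\F_p$-additive $\gh$ code of length $p^2$ with $\ker(C_H)=1$ and $\rank(C_H)=3=t+2-k$.

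There is no serious obstacle in the proof itself; it is essentially a packaging result that glues \Cref{coro:bounds-a}, \Cref{kernel_additiu} and \Cref{prop:r3k1} together along the identity $r=t+2-k$. The one subtlety worth flagging is that \Cref{prop:r3k1} requires $p$ to be odd, so the pair $(r,k)=(3,1)$ is not realised when $p=2$ and $t=2$ — the unique $H(4,1)$ up to equivalence is the linear Sylvester one — and hence the corollary implicitly carries the restriction $p$ odd in the single case $t=e=2$. This is the only place where the assembly is not completely mechanical and is acknowledged in the discussion preceding \Cref{kernel_additiu}.
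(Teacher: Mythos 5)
Your proof is correct and follows essentially the same route as the paper, whose proof simply cites \Cref{bounds-a,coro:bounds-a,prop:r3k1,kernel_additiu} and glues them along $r=t+2-k$ exactly as you do. Your added observation that \Cref{prop:r3k1} requires $p$ odd, so the pair $(r,k)=(3,1)$ is not realised when $p=2$ and $t=2$, is a genuine caveat that the stated corollary glosses over (the paper only acknowledges the uniqueness of $H(4,1)$ in the remark preceding \Cref{prop:r3k1}).
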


\begin{proof}
Straightforward by \Cref{bounds-a,coro:bounds-a,prop:r3k1,kernel_additiu}.
\end{proof}

\begin{example}
For $q=p^2$, the second column in Table~\ref{tableF4} gives all possible values for the rank and dimension of the kernel of $\F_p$-additive $\gh$ codes over $\F_{p^2}$ of length $n=p^t$ with $2\leq t \leq 7$, given by \Cref{bounds-a,coro:bounds-a}. By \Cref{cor:3.9}, for each one of these pairs, there exists an $\F_p$-additive  $\gh$ code over $\F_{p^2}$ having these parameters.

\begin{table}[htp]%
\begin{center}
\begin{tabular}{ccc}
\toprule%
$t$ &  $(\rank(C_H), \ker(C_H))$ & $\ker_p(C_H)=\rank_p(C_H)$ \\ \toprule
2   &   (3,1) (2,2)    & 2 \\
3   &   (4,1)    & 2.5\\
4   &   (5,1) (4,2) (3,3) & 3 \\
5   &   (6,1) (5,2)  & 3.5\\
6   &   (7,1) (6,2) (5,3) (4,4) & 4 \\
7   &   (8,1) (7,2) (6,3) & 4.5 \\ \bottomrule
\end{tabular}\caption{Parameters $(\rank(C_H), \ker(C_H))$ for all  $\F_p$-additive $\gh$ codes $C_H$ over $\F_{p^2}$
of length $n=p^t$ with $2\leq t\leq 7$.}\label{tableF4}
\end{center}
\end{table}
\end{example}

\begin{example}
For $q=4$, that is when $p=2$ and $e=2$, we can take into account some already known results on the classification of $\gh$ matrices.
\begin{itemize}
\item If $n=4$ ($t=2$), there is only one $\gh$ matrix $H(4,1)$ over $\F_4$ having $\rank(C_H)=\ker(C_H)=2$.
Therefore, $C_H$ is linear over $\F_4$, so $\F_2$-additive. Actually, $H(4,1)$ corresponds to the Sylvester $\gh$ matrix $S^1=S_4$.

\item If $n=8$ ($t=3$), there is only one $\gh$ matrix $H(4,2)$ having $\rank(C_H)=4$ and $\ker(C_H)=1$. Therefore,
$C_H$ is nonlinear over $\F_4$. However, since $\rank_2(C_H)=\ker_2(C_H)=2.5$, it is $\F_2$\nobreakdash-additive.

\item If $n=16$ ($t=4$), it is known that there are 226 nonequivalent $\gh$ matrices $H(4,4)$  over $\F_4$ \cite{Harada}. \Cref{tableN16} shows the ranks and dimensions of the kernel
of the corresponding codes $C_H$. Moreover, for each case, it also gives the value $Na / N$, where $N$ is the number of nonequivalent codes and $Na$ the number of such codes that are $\F_2$-additive.  
\begin{table}[ht]
\begin{center}
\begin{tabular}{r | c c c c c c}
   & \multicolumn{6}{c}{$\rank(C_H)$} \\
$\ker(C_H)$  & 3 & 4 & 5 & 6 & 7 & 8 \\
\hline
3 & 1/1 &   &   &     \\
2 &   & 5/7  & 0/8 &     \\
1 &   & 0/3  & 38/92  & 0/55 & 0/57 & 0/3    \\
\end{tabular}\caption{Number of nonequivalent $\F_2$-additive $\gh$ codes of length $16$ over $\F_4$ versus the total number, for each possible rank and dimension of the kernel.}\label{tableN16}
 \end{center} \end{table}
\end{itemize}
\end{example}

Recall that the more general switching construction, given by \Cref{constructionSwitch2}, allows for the construction of $\F_p$-additive $\gh$ codes having different ranks and dimensions of the kernel, when $t$ is a multiple of $e$ and $k>1$. Now, we combine this construction with the Kronecker sum construction to cover more cases, proving the existence of $\F_p$-additive $\gh$ codes $C_H$ over $\F_q$ of length $n=p^t$, where $q=p^e$, when $t$ is {\it not} a multiple of $e$ and $k>1$.

\begin{proposition} \label{constructionSwitch3} 
For $q=p^e$, $p$ prime, and any $t$ not a multiple of $e>1$,
there exists an  $\F_p$\nobreakdash-additive $\gh$ code $C_H$ over $\F_q$ of length $n=p^t$ with $\rank(C_H)=r$ and $\ker(C_H)=k$, for all $k \in \{2,\ldots, h=\lfloor t/e \rfloor\}$ and
$$r \in \{2h-k+t+e-he,\ldots, 1+t-(e-1)(k-1)\}.$$
\end{proposition}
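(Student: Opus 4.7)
The plan is to reduce this non-multiple-of-$e$ case to building blocks already handled by \Cref{constructionSwitch2} and \Cref{construction_t}, combined via the Kronecker sum of \Cref{lem:Kro1}. Write $t=he+t'$ with $h=\lfloor t/e\rfloor$ and $1\leq t'\leq e-1$ (possible since $t$ is not a multiple of $e$). First I would invoke \Cref{construction_t}, which applies because $e+t'>e>1$, to fix an $\F_p$-additive $\gh$ matrix $H_2=H(p^e,p^{t'})$ of length $p^{e+t'}$ with $\ker(C_{H_2})=1$ and $\rank(C_{H_2})=e+t'+1$. The role of $H_2$ is to absorb the leftover part $t'$ inside a length-$p^{e+t'}$ factor, while the rank and kernel variation will come from the other factor.

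For each target $k\in\{2,\ldots,h\}$ I would build an auxiliary $\F_p$-additive $\gh$ code $C_{H_1}$ of length $p^{(h-1)e}$ (which is a multiple of $e$). If $k=h$, I take $H_1=S^{h-1}$, the Sylvester $\gh$ matrix, so that $C_{H_1}$ is linear over $\F_q$ with $\ker(C_{H_1})=\rank(C_{H_1})=h$. If $2\leq k\leq h-1$, which forces $h\geq 3$, I apply \Cref{constructionSwitch2} at length $p^{(h-1)e}$ to realise $\ker(C_{H_1})=k$ together with any prescribed rank $r_1\in\{2(h-1)-k+2,\ldots,1+(h-1)e-(e-1)(k-1)\}$. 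In both branches $C_{H_1}$ is $\F_p$-additive.

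Then I would form the Kronecker sum $H=H_1\oplus H_2$ and apply \Cref{lem:Kro1}. This produces an $\F_p$-additive $\gh$ code $C_H$ of length $p^{(h-1)e}\cdot p^{e+t'}=p^t$ with $\ker(C_H)=\ker(C_{H_1})+\ker(C_{H_2})-1=k$ and $\rank(C_H)=\rank(C_{H_1})+\rank(C_{H_2})-1=r_1+e+t'$. When $k=h$ this yields the single value $r=h+e+t'$, which equals both endpoints of the range in the statement (the range collapses there since $(e-1)(h-1)+(h+e+t')=1+t$). When $2\leq k\leq h-1$, rewriting the interval of achievable ranks $\{r_1+e+t'\}$ using $t=he+t'$ gives exactly $\{2h-k+t+e-he,\ldots,1+t-(e-1)(k-1)\}$, matching the claim.

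The step I would expect to demand most care, though not real difficulty, is verifying that shifting the rank range of \Cref{constructionSwitch2} by $e+t'$ produces precisely the rank range claimed here; this is pure bookkeeping with $t=he+t'$. The $\F_p$-additivity and the $\gh$ property of the resulting matrix follow automatically from \Cref{lem:Kro1} once both inputs have these properties, so there is no genuine obstacle beyond that arithmetic verification and the split into the two sub-cases $k=h$ and $k<h$.
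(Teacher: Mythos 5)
Your proposal is correct and follows essentially the same route as the paper: decompose $t=(h-1)e+(e+t')$, apply \Cref{constructionSwitch2} at length $p^{(h-1)e}$ and \Cref{construction_t} at length $p^{e+t'}$, and combine via the Kronecker sum of \Cref{lem:Kro1}, with the same bookkeeping for the rank range. The only cosmetic difference is that for $k=h$ the paper cites \Cref{kernel_additiu} for the (collapsed) maximum-rank case, while you build it directly as $S^{h-1}\oplus H_2$; both are valid and amount to the same construction.
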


\begin{proof}
Take $q=p^e$, $p$ prime, and any $t>e$ not a multiple of $e>1$. Set $t=he+h'$, where $h=\lfloor t/e \rfloor$. Hence, we can write $t=(h-1)e+(e+h')$. 

If $h=k$, then $r=2h-k+t+e-he=1+t-(e-1)(k-1)$, so the code has the maximum rank $r$, and its existence is given by \Cref{kernel_additiu}. 
Note that if $h=2$, then $k=2$, so we can focus on the case where $h\geq 3$. 

If $h\geq 3$, by \Cref{constructionSwitch2}, there exists a $\gh$ code $D$ of length $p^{(h-1)e}$ with $\ker(D)=k$ for all $k \in \{2,\ldots, h-1\}$ and $\rank(D) \in \{2(h-1)-k+2, \ldots, h+(h-k)(e-1)\}$. Also, by \Cref{construction_t}, there exists a $\gh$ code $E$ of length $p^{e+h'}$ with $\ker(E)=1$ and $\rank(E)= e+h'+1$. Then, by \Cref{lem:Kro1}, using the Kronecker sum construction with $D$ and $E$, we obtain an $\F_p$\nobreakdash-additive $\gh$ code $C$ over $\F_q$ of length $n=p^t$ with $\ker(C)=k$ and $\rank(C)=r$, for all $2\leq k \leq h -1$ and
$2(h-1)-k+2+e+h'+1-1 \leq r \leq h+(h-k)(e-1)+e+h'+1-1$, or equivalently, $ 2h-k+t+e-he \leq  r \leq 1+t-(e-1)(k-1)$. Therefore, the result follows.
\end{proof}

Note that the $\F_p$-additive $\gh$ codes constructed from the more general switching construction, given by \Cref{constructionSwitch2},  do not cover all possible pairs $(r,k)$, where $r$ is the rank and $k$ the dimension of the kernel, when $t$ is a multiple of $e$ and $k>1$. The upper bounds in \Cref{bounds-a,constructionSwitch2} coincide since $h+1+(h-k+1)(e-1)=1+t-(e-1)(k-1)$ if $t=he$. However, the lower bounds do not coincide in general. The smallest case where both propositions disagree is for $e=3$, $t=9$, $h=3$ and $k=2$. By using \Cref{constructionSwitch2}, we know that we can construct $\F_p$-additive $\gh$ codes with these parameters having rank $r$ for all $r\in \{6,7,8\}$. However, from \Cref{bounds-a}, we have that $r\in \{5,6,7,8\}$, and it is not known whether there is a code having rank $r=5$.

As we just noted for the case when $t$ is a multiple of $e$ and $k>1$, the codes constructed in \Cref{constructionSwitch3} when $t$ is not a multiple of $e$  do not cover all possible pairs $(r,k)$ given by \Cref{bounds-a}. Again, the upper bounds coincide, but not the lower bounds. The smallest case where they disagree is for $e=3$, $t=7$, $h=2$ and $k=2$. By using \Cref{constructionSwitch3}, there exist $\F_p$-additive $\gh$ codes with these parameters having rank $r=6$. However, from \Cref{bounds-a}, we have that $r\in \{4,5,6\}$, and the existence of the case with $r \in \{4,5\}$ is not known. 

\begin{theorem} \label{theo:resum}
For $q=p^e$, $p$ prime, and any $t>e>1$,
there exists an  $\F_p$\nobreakdash-additive $\gh$ code $C_H$ over $\F_q$ of length $n=p^t$ with $\ker(C_H)=k$ and $\rank(C_H)=r$, for all $k \in \{2,\ldots, h=\lfloor t/e \rfloor\}$ and
$$r \in \{l_k,\ldots, 1+t-(e-1)(k-1)\},$$ where
\begin{equation*}
  l_k =
    \begin{cases}
      2h-k+2  & \text{if $t$ is multiple of $e$},\\
      2h-k+t+e-he,  & \text{otherwise}.
    \end{cases}       
\end{equation*}
\end{theorem}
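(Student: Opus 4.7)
The plan is to observe that this theorem is essentially a restatement combining the two existence results already established, namely \Cref{constructionSwitch2} (Switching Construction III) and \Cref{constructionSwitch3}. So the proof reduces to a short case analysis on whether $e \mid t$.

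First, I would split according to the two cases of the definition of $l_k$. If $t$ is a multiple of $e$, then $h = t/e$, so $1+t-(e-1)(k-1) = h+1+(h-k+1)(e-1)$, and the claim becomes exactly the statement of \Cref{constructionSwitch2}: for every $k \in \{2,\ldots,h\}$ one obtains $\F_p$-additive $\gh$ codes $C_H$ over $\F_q$ of length $n = q^h = p^t$ with $\ker(C_H)=k$ and rank taking every value in $\{2h-k+2, \ldots, 1+t-(e-1)(k-1)\} = \{l_k,\ldots,1+t-(e-1)(k-1)\}$. So this case is immediate.

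In the second case, when $t$ is not a multiple of $e$, set $h = \lfloor t/e\rfloor$; then $l_k = 2h-k+t+e-he$ agrees exactly with the lower endpoint given in \Cref{constructionSwitch3}, and the upper endpoint $1+t-(e-1)(k-1)$ is the same in both statements. Hence \Cref{constructionSwitch3} directly delivers, for each $k \in \{2,\ldots,h\}$, an $\F_p$-additive $\gh$ code with the prescribed parameters.

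Since the two propositions together cover all $k \in \{2,\ldots,h\}$ and all $r$ in the stated range, the theorem follows. There is no significant obstacle here; the only thing to check is that the two formulas for $l_k$ in the theorem match the lower bounds stated in \Cref{constructionSwitch2,constructionSwitch3} under the appropriate divisibility assumption on $t$, which is a direct substitution.
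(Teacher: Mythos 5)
Your proposal is correct and matches the paper's proof, which simply invokes \Cref{constructionSwitch2} when $e \mid t$ and \Cref{constructionSwitch3} otherwise; your verification that the endpoints $l_k$ and $1+t-(e-1)(k-1)$ coincide with the ranges in those propositions is exactly the (routine) content of the argument.
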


\begin{proof}
Straightforward from \Cref{constructionSwitch2,constructionSwitch3}.
\end{proof}

\begin{corollary} \label{cor:3.10} For $q=p^e$, $p$ prime, and any $t > e>1$, there exists an $\F_p$\nobreakdash-additive $\gh$  code $C_H$ over $\F_q$ of length $n=p^t$ with $\ker(C_H)=k$ and $\rank(C_H)=r$, if and only if
\begin{enumerate}[(i)]
\item  $r \in \{t/e + 2, \ldots, t/e + e \}$ when $k=t/e$ ($t$ is a multiple of $e$);
\item  $r=1+t/e$ when $k=1+t/e$ ($t$ is a multiple of $e$).
\end{enumerate}
\end{corollary}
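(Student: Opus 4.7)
The plan is to derive this corollary as a direct specialisation of the earlier results, namely Proposition~\ref{bounds-a} (for the necessity of the rank values, i.e.\ the \emph{only if} direction) and Theorem~\ref{theo:resum} together with the Sylvester construction (for the existence, i.e.\ the \emph{if} direction). I would treat the two items independently, since they correspond respectively to the top value $k=1+t/e$ and the second-highest value $k=t/e$ of the kernel dimension, for which the interval of allowable ranks collapses to something very explicit.

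First I would dispatch item (ii). The value $k=1+t/e$ attains the upper bound in Proposition~\ref{bounds-a}, so by item (iv) of that proposition $C_H$ must be linear over $\F_q$ and consequently $\rank(C_H)=\ker(C_H)=1+t/e$. Existence for this single pair $(r,k)$ is witnessed by the Sylvester $\gh$ matrix $S^h$ with $h=t/e$, whose associated code $C_{S^h}$ is linear over $\F_q$ of length $n=q^h=p^t$ and has both rank and kernel dimension equal to $1+h$.

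For item (i), with $k=t/e=h$ and $t$ a multiple of $e$, the necessity follows by plugging this value into the bounds of item (iii) of Proposition~\ref{bounds-a}: the upper bound reads $1+t-(e-1)(h-1)=t/e+e$, while the lower bound reads $(e+t-k)/(e-1)=e/(e-1)+t/e$. Since $e/(e-1)\in(1,2]$ for $e>1$ and $r$ is a positive integer, this forces $r\geq t/e+2$, so $r\in\{t/e+2,\ldots,t/e+e\}$. For sufficiency, I would apply Theorem~\ref{theo:resum} with $k=h$: the lower bound there becomes $l_k=2h-k+2=h+2=t/e+2$ and the upper bound becomes $1+t-(e-1)(h-1)=t/e+e$, so the theorem already constructs an $\F_p$\nobreakdash-additive $\gh$ code $C_H$ for every target rank $r$ in the required range.

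There is no serious obstacle: the entire content of the corollary is captured by combining the general bounds of Proposition~\ref{bounds-a} with the existence statements of Theorem~\ref{theo:resum} and the Sylvester construction, the only minor point being the integrality argument that rounds $e/(e-1)+t/e$ up to $t/e+2$, which is immediate from $1<e/(e-1)\leq 2$.
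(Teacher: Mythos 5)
Your proof is correct and follows essentially the same route as the paper: existence comes from \Cref{theo:resum}, whose bounds for $k=t/e$ collapse to exactly $\{t/e+2,\ldots,t/e+e\}$, and necessity comes from rounding up the lower bound $\frac{e+t-k}{e-1}$ of \Cref{bounds-a} using integrality. The only difference is that you also spell out item (ii) via linearity and the Sylvester construction, which the paper's proof leaves implicit.
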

\begin{proof}
By \Cref{theo:resum}, 
we have that there exists an $\F_p$\nobreakdash-additive $\gh$  code $C_H$ over $\F_q$ of length $n=p^t$ 
with $\ker(C_H)=h$ and $\rank(C_H)=r$ for all $r \in \{h + 2, \ldots, h + e \}$, where $h=t/e$. 
In this case, the value $h+2$ coincides with the lower bound  $\lceil\frac{ e+t-t/e}{e-1} \rceil$, given in \Cref{bounds-a}.
Note that $\lceil\frac{ e+t-t/e}{e-1} \rceil =\lceil\frac{ e-1+1+t-t/e}{e-1} \rceil =1 + \lceil \frac{1}{e-1} \rceil +t/e=
t/e+2=h+2$. Similarly, the value $h+e$ is equal to the upper bound given by \Cref{bounds-a}, since  $1+t-(e-1)(t/e-1)=t/e+e=h+e$.
\end{proof}

\begin{example}
In \Cref{tableF8bounds} of \Cref{ex:e3}, all possible values for the rank of $\F_p$-additive $\gh$ codes of length $n=p^t$ with $2\leq t\leq 12$, once the dimension of the kernel is given, are shown. By \Cref{theo:resum}, for each one of these values, except for the ones in bold type, there exists an $\F_p$-additive $\gh$ code having these parameters.  

As it is noticed in \Cref{cor:3.10}, we can also see in \Cref{tableF8bounds} that when $t=3h$ with $t>3$, if $k=h$ or $k=h+1$, we can construct an $\F_p$-additive $\gh$ code $C_H$ with $\ker(C_H)=k$ and $\rank(C_H)=r$ for all possible values of $r$ between the bounds given by \Cref{bounds-a}.
\end{example}

By using the $\F_p$\nobreakdash-additive $\gh$ codes over $\F_{p^4}$ of length $n=p^4$ (with $p=3$ and $p=5$) and a kernel of dimension 1, given in \Cref{ExHq3rank3,ExHq5rank3}, along with the Kronecker sum construction, we show the existence of  $\F_p$\nobreakdash-additive $\gh$ codes over $\F_{p^4}$ with greater length, kernel of dimension 1 and different ranks.  

\begin{proposition}
For $q=p^4$, with $p=3$ or $p=5$, and any $t\geq 4$,
there exists an  $\F_p$\nobreakdash-additive $\gh$ code $C_H$ over $\F_q$ of length $n=p^t$ with $\ker(C_H)=1$ and $\rank(C_H)=t+1-2i $, for all $i \in \{0,\ldots, \lceil t/4 \rceil -2 \}$.
\end{proposition}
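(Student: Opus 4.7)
The plan is to build the required code as an iterated Kronecker sum whose factors have already been produced in the paper. Let $D$ denote the length-$p^4$ $\F_p$-additive $\gh$ code over $\F_{p^4}$ with $\ker(D)=1$ and $\rank(D)=3$ given by \Cref{ExHq3rank3} (for $p=3$) or \Cref{ExHq5rank3} (for $p=5$). For each admissible $i$, I will invoke \Cref{construction_t} to obtain an $\F_p$-additive $\gh$ code $B$ over $\F_{p^4}$ of length $p^{t-4i}$ with $\ker(B)=1$ and $\rank(B)=t-4i+1$. The hypothesis of \Cref{construction_t} requires $t-4i>e=4$, i.e.\ $t-4i\geq 5$; a short case check on $t\bmod 4$ evaluated at the largest admissible value $i=\lceil t/4\rceil-2$ shows that $t-4i$ always lies in $\{5,6,7,8\}$, so this hypothesis is met throughout the claimed range of~$i$.

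Next, I will form the iterated Kronecker sum
\[
H \;=\; \underbrace{D\oplus D\oplus\cdots\oplus D}_{i\text{ copies}}\oplus\, B,
\]
which has length $p^{4i}\cdot p^{t-4i}=p^t$. Applying \Cref{lem:Kro1} inductively on the number of factors, the resulting code $C_H$ will be $\F_p$-additive, its kernel will have dimension $(i+1)\cdot 1-i=1$, and its rank will equal $3i+(t-4i+1)-i=t+1-2i$, which is the target value. The case $i=0$ degenerates to $C_H=C_B$ and is handled directly by \Cref{construction_t}.

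The only delicate point is the bookkeeping: one has to verify that \Cref{lem:Kro1}, stated for two factors, extends by associativity of the Kronecker sum to an arbitrary number of factors with a ``minus one per added factor'' correction on both rank and kernel, and that $\F_p$-additivity is preserved at each step. Both facts follow by immediate induction on the number of factors. The substance of the argument is therefore the use of the exceptional short blocks $D$ from \Cref{ExHq3rank3,ExHq5rank3}: these are the only known source, for the fields $\F_{3^4}$ and $\F_{5^4}$, of an $\F_p$-additive $\gh$ code with kernel of dimension~$1$ and rank strictly less than the maximum $t+1$, and combining them with the maximum-rank blocks of \Cref{construction_t} yields the whole family of intermediate ranks asserted.
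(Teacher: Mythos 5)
Your construction is correct and is essentially the paper's proof: the paper argues by induction on $t$ in steps of $4$, Kronecker-summing one copy of the rank-$3$ length-$p^4$ block from \Cref{ExHq3rank3,ExHq5rank3} at each step onto a maximal-rank base case, which unrolls to exactly your iterated sum of $i$ such blocks with one block from \Cref{construction_t}. The rank/kernel bookkeeping via repeated application of \Cref{lem:Kro1} and the check that $t-4i\geq 5$ for all admissible $i$ are both right.
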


\begin{proof}
First, note that the upper bound for the rank is $t+1$ when the dimension of the kernel is $1$.
For $t \in \{5,6,7,8\}$, we have that $r=t+1$ and the statement is true by  \Cref{kernel_additiu}.
Assume that it is true for $t'\in \{4h'-3, 4h'-2, 4h'-1,4h' \}$, $h'\geq 2$. That is, by induction hypothesis, there exist an $\F_p$\nobreakdash-additive $\gh$ code $C_D$ with $\ker(C_D)=1$ and $\rank(C_D)=t'+1-2j$ for all $j \in \{0,\ldots,h'-2 \}$. 
By \Cref{ExHq3rank3,ExHq5rank3}, there exists an $\F_p$\nobreakdash-additive $\gh$ code $C_E$ with $\ker(C_E)=1$ and $\rank(C_E)=3$.  By \Cref{lem:Kro1}, applying the Kronecker sum construction to the corresponding $\gh$ matrices $D$ and $E$, there exist  $\F_p$\nobreakdash-additive $\gh$ codes $C_H$ of length $n=p^t$, with $t=t'+4$ and $h=h'+1$, having $\ker(C_H)=\ker(C_D)+\ker(C_E)-1=1$ and
$\rank(C_H)=\rank(C_D)+\rank(C_E)-1=t'+1-2j+3-1$ for all $j \in \{0,\ldots, h'-2\}$,
or equivalently, $\rank(C_H)=t-4+1-2j+2=t+1-2(j+1)=t+1-2i$ for all $i \in  \{1,\ldots, h-2\}$. 
Finally, again by \Cref{kernel_additiu}, there is a code $C_H$ for $i=0$, and the result follows. 
\end{proof}

Finally, we show that if the existence of $\F_p$\nobreakdash-additive $\gh$ codes with dimension of the kernel 1 and any rank between the given lower and upper bounds is proved, then we would have the existence of any such code with rank $r$ and kernel of dimension $k$ for any possible pair $(r,k)$.   

\begin{theorem} \label{prop:3.12} Let $q=p^e$, $p$ prime, and $e>1$. 
If there  exist an $\F_p$\nobreakdash-additive $\gh$ code $C_H$ over $\F_q$ of length $n=p^t$, $t>e$, with $\ker(C_H)=1$ and $\rank(C_H)=r$ for all $r \in \{\lceil \frac{e+t-1}{e-1} \rceil,\ldots, 1+t \}$, then 
there exists an $\F_p$\nobreakdash-additive $\gh$ code $C_H$ over $\F_q$ of length $n=p^t$ with $\ker(C_H)=k$ and $\rank(C_H)=r$, for all $k\in \{2,\ldots, \lfloor t/e \rfloor \}$ and $r \in \{\lceil \frac{e+t-k}{e-1}\rceil,\ldots, 1+t +(e-1)(k-1)\}$.
\end{theorem}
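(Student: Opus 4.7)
The plan is to transport the hypothesised $\ker=1$ codes of shorter length up to an arbitrary kernel $k$ by taking the Kronecker sum with a Sylvester $\gh$ matrix, using \Cref{lem:Kro1} to propagate rank and kernel simultaneously. Given $k \in \{2,\ldots,\lfloor t/e\rfloor\}$, I set $t_0 := t - e(k-1)$ and aim to realise every target code as $H_0 \oplus S^{k-1}$, where $H_0$ is an $\F_p$\nobreakdash-additive $\gh$ matrix of length $p^{t_0}$ with $\ker(C_{H_0})=1$ (provided by the hypothesis), and $S^{k-1}$ is the Sylvester matrix of length $q^{k-1}=p^{e(k-1)}$, whose code is linear over $\F_q$ with $\rank = \ker = k$.

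First I check the ranges. The constraint $k\leq\lfloor t/e\rfloor$ forces $t_0\geq e$, and $t_0=e$ occurs exactly when $t$ is a multiple of $e$ and $k=t/e$. In that boundary case the hypothesis (which requires $t_0>e$) does not apply, but the target rank interval $\{\lceil(e+t-k)/(e-1)\rceil,\ldots,1+t-(e-1)(k-1)\}$ collapses to $\{t/e+2,\ldots,t/e+e\}$, which is already supplied by \Cref{cor:3.10}. For the remaining range $t_0>e$, the hypothesis gives, for every $r_0\in\{\lceil(e+t_0-1)/(e-1)\rceil,\ldots,1+t_0\}$, a code $C_{H_0}$ of length $p^{t_0}$ with $\ker(C_{H_0})=1$ and $\rank(C_{H_0})=r_0$. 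Setting $H:=H_0\oplus S^{k-1}$ and applying \Cref{lem:Kro1} (both summands are $\F_p$\nobreakdash-additive, the second being even $\F_q$-linear), I conclude that $C_H$ is an $\F_p$\nobreakdash-additive $\gh$ code over $\F_q$ of length $p^{t_0}\cdot p^{e(k-1)}=p^t$ with $\ker(C_H)=1+(k-1)=k$ and $\rank(C_H)=r_0+(k-1)$.

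The remainder is bookkeeping: as $r_0$ traverses its consecutive integer range, so does $r=r_0+(k-1)$, and the two endpoints map to $\lceil(e+t_0-1)/(e-1)\rceil+(k-1)=\lceil(e+t-k)/(e-1)\rceil$ and $(1+t_0)+(k-1)=1+t-(e-1)(k-1)$; both identities reduce, after clearing $e-1$ inside the ceiling, to the trivial $(k-1)(e-1)=(k-1)(e-1)$, and no gaps arise because the shift is by an integer. The only point that really needs care is isolating the boundary $t_0=e$ and citing \Cref{cor:3.10}. One remark worth flagging: the printed upper endpoint $1+t+(e-1)(k-1)$ in the statement strictly exceeds the universal upper bound $1+t-(e-1)(k-1)$ of \Cref{bounds-a} as soon as $k\geq 2$, and so cannot be attained in general; the construction above realises exactly $1+t-(e-1)(k-1)$, matching \Cref{bounds-a}, and I read the printed ``$+$'' as a sign typo for ``$-$''.
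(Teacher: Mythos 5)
Your proposal is correct and follows essentially the same route as the paper: the paper lifts the $\ker=1$ codes by repeatedly applying $S_q\oplus(\cdot)$ (induction over $t$ in steps of $e$, as in the proof of \Cref{kernel_additiu}), which is exactly your single Kronecker sum $H_0\oplus S^{k-1}$ telescoped, and your endpoint arithmetic matches the paper's. You are in fact more careful than the printed proof on two points it leaves implicit: the boundary case $t_0=e$ (i.e.\ $e\mid t$ and $k=t/e$), where the $\ker=1$ hypothesis is unavailable and one must fall back on \Cref{cor:3.10}, and the sign in the stated upper endpoint, which should indeed read $1+t-(e-1)(k-1)$.
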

\begin{proof}
It follows from the same arguments as in the proof of \Cref{kernel_additiu}, by induction over $t$, in steps of $e$.  Note that, for the lower bound of the rank, by induction hypotheses, there exists an $\F_p$\nobreakdash-additive $\gh$ code of length $p^{t-e}$ with kernel of
dimension $k-1$ and rank $\lceil \frac{e+t-e-(k-1)}{e-1}\rceil$. After applying the Kronecker sum construction, the new $\F_p$\nobreakdash-additive $\gh$ code $C_H$ of length $2^t$ would have $\ker(C_H)=k$ and $\rank(C_H)=\lceil \frac{e+t-e-(k-1)}{e-1}\rceil +1=\lceil \frac{e+t-k}{e-1}\rceil$.
\end{proof}

\section{Self-orthogonal $\mathbf{\F_p}$-additive $\gh$ codes over $\mathbf{\F_{p^2}}$ and quantum codes}
\label{sec:quantum}

The question of finding quantum-error correcting codes is transformed into the question of finding additive codes over a finite field which are self-orthogonal with respect to a certain trace inner product \cite{AK01,Cal98}. In this section, we see that the $\F_p$-additive $\gh$ codes over $\F_q$ with $q=p^2$ constructed in the previous sections are self-orthogonal, so they can be used to produce quantum codes. 

\medskip
For codes of length $n$ over $\F_{p^2}$, there are other well known inner products besides the Euclidean inner product. Let $\vv=(v_1,\ldots,v_n)$ and $\vw=(w_1,\ldots,w_n) \in \F_{p^2}^n$.  The {\it Hermitian inner product}, defined by
$[\vv,\vw]_H = \sum_{i=1}^n v_i w_i^p$ and the {\it trace Hermitian inner product}, which is used for $\F_p$\nobreakdash-additive codes over $\F_{p^2}$,
that is $\langle \vv, \vw \rangle = \sum_{i=1}^n (v_i w_i^p - v_i^p w_i)$ or $\langle \vv, \vw \rangle = \beta\sum_{i=1}^n (v_i w_i^p - v_i^p w_i)$ depending on
whether $p$ is even or odd, respectively~\cite{slo}, where $\beta=\omega^{(p+1)/2}$ and $\omega$ is a primitive element in $\F_{p^2}$. 
For short, we will call it {\it additive inner product}. We denote the orthogonal code defined by the additive inner product as $C^\perp$.

\begin{lemma}\label{solambda}
 Let $\vv,\vw$ be  two rows of a $\gh$ matrix $H(p^2,\lambda)$ with $p\not=2$.  Then $\langle \vv,\vw \rangle =0$.
\end{lemma}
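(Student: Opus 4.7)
The plan is to expand the inner product via the substitution $d_j := v_j - w_j$, use the $\gh$ property to shift to sums indexed by $\F_{p^2}$, and exploit Frobenius symmetries together with the identity $\sum_{\alpha\in\F_{p^2}}\alpha^{p+1}=0$ (valid for $p>2$; for $p=2$ this sum equals $1$, which is the source of the hypothesis).

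If $\vv = \zero$ or $\vw = \zero$, the inner product is $0$ by inspection. Otherwise, normalizing $H$ preserves the $\gh$ property and ensures that $\vv$, $\vw$, and the difference vector $(d_j)_j$ each contain every element of $\F_{p^2}$ exactly $\lambda$ times. Setting $A := \sum_j d_j w_j^p$ and using $w_j^{p^2}=w_j$, one computes
\begin{equation*}
\sum_{j=1}^n (v_j w_j^p - v_j^p w_j) = \sum_{j=1}^n (d_j w_j^p - d_j^p w_j) = A - A^p,
\end{equation*}
so $\langle\vv,\vw\rangle = \beta(A - A^p)$ and it suffices to prove $A = A^p$.

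A key ingredient is the vanishing $\sum_j v_j^{p+1} = \sum_j w_j^{p+1} = \sum_j d_j^{p+1} = \lambda\sum_{\alpha\in\F_{p^2}}\alpha^{p+1} = 0$, a consequence of the norm map $\F_{p^2}^*\to\F_p^*$ being a $(p+1)$-to-$1$ surjection together with $\sum_{\F_p^*}\alpha=0$ for $p$ odd. Substituting alternately $w_j=v_j-d_j$ and $d_j=v_j-w_j$ into $A$ and invoking these identities yields
\begin{equation*}
A = \sum_j d_j w_j^p = \sum_j d_j v_j^p = -\sum_j w_j v_j^p = -A^p,
\end{equation*}
placing $A$ in the one-dimensional trace-zero $\F_p$-subspace of $\F_{p^2}$.

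The main obstacle is upgrading ``$A$ trace-zero'' to ``$A=0$'', since $A=-A^p$ alone yields $L := A - A^p = 2A$, which vanishes only if $A=0$. To close the argument I would exploit the $\gh$ property of the transpose $H^T$---so columns are also difference-uniform---and apply a parallel manipulation to obtain the complementary identity $A=A^p$ (that is, $A$ is Frobenius-fixed). Since the subspaces $\F_p$ and $\theta\F_p$ (with $\theta\in\F_{p^2}$ satisfying $\theta^p=-\theta$) intersect trivially in $\F_{p^2}$ for $p\neq 2$, combining the two identities forces $A=0$, hence $\langle\vv,\vw\rangle = 2\beta A = 0$. The hypothesis $p\neq 2$ is used twice: in the vanishing $\sum_\alpha\alpha^{p+1}=0$ and in separating the two distinct $\F_p$-lines in $\F_{p^2}$.
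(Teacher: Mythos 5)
Your reduction of $\langle\vv,\vw\rangle$ to the single field element $A=\sum_j v_jw_j^p$ (your $\sum_j d_jw_j^p$ equals this, since $\sum_j w_j^{p+1}=0$) and your derivation of $A+A^p=0$ from the vanishing norm sums $\sum_j v_j^{p+1}=\sum_j w_j^{p+1}=\sum_j(v_j-w_j)^{p+1}=0$ is exactly the computation in the paper's proof; up to that point the two arguments coincide, and you are right that $\operatorname{Tr}(A)=0$ alone only gives $\langle\vv,\vw\rangle=2\beta A$ with $A$ in the trace-zero line. The genuine gap is your closing step. The complementary identity $A=A^p$ is asserted, not derived, and it cannot come from a ``parallel manipulation'' on $H^T$: the $\gh$ property of the transpose constrains sums of the form $\sum_i h_{is}h_{it}^p$ taken over all \emph{rows} for a fixed pair of \emph{columns}, whereas $A$ is a sum over all columns for a fixed pair of rows, and no combination of the former isolates the latter. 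Worse, the three equidistribution facts you actually use (for $\vv$, $\vw$ and $\vv-\vw$) can never pin down the antisymmetric moment $A-A^p$: writing $\epsilon(x,y)=xy^p-x^py$, one checks that $\epsilon$ does not lie in the $\F_p$-span of functions of $x$ alone, of $y$ alone, and of $x-y$ alone (only the symmetric combination $xy^p+x^py$ does, via $N(x)+N(y)-N(x-y)$), so the marginal distributions leave $A$ free on the trace-zero line. Any proof must therefore invoke something beyond these marginals, and your proposal does not supply it.

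For comparison, the paper closes the argument by substituting $\beta\vv$ for $\vv$: using $\beta^p=-\beta$, the vanishing symmetric sum $-\beta\sum_i(v_i^pw_i+w_i^pv_i)$ is rewritten as $\sum_i\bigl((\beta v_i)^pw_i-w_i^p(\beta v_i)\bigr)$ and then identified with $\beta\langle\vv,\vw\rangle$, from which $\langle\vv,\vw\rangle=0$ is concluded. You should examine that final identification critically, since the trace Hermitian form is only $\F_p$-bilinear and $\beta\notin\F_p$; but that, and not the transpose, is the paper's route, and your plan offers no working substitute for it. A secondary point: you cannot simply ``normalise $H$ without loss of generality,'' because adding a constant $c$ to column $j$ changes $\langle\vv,\vw\rangle$ by $\pm\epsilon(c,v_j-w_j)$, so the inner product of two fixed rows is not invariant under the equivalence operations; the equidistribution of the individual rows (as opposed to their difference, which is what the $\gh$ definition gives) has to be justified rather than assumed.
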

\begin{proof}
Since the nonzero elements of $\F_{p}$ are the roots of the polynomial $x^{p-1}-1$, we see that $\sum_{i=0}^{p-2} \alpha^i=0$, where $\alpha \in \F_{p}$ is a primitive element and $p\not= 2$.

Now, let $\omega$ be a primitive element in $\F_{p^2}$ and take $\alpha=\omega^{p+1}$ which is a primitive element in $\F_p$. Then
\begin{equation}\label{suma}
\sum_{j=0}^{p^2-2} (\omega^j)^{p+1} = \sum_{i=0}^{p-2} \sum_{t=1}^{p+1}(\omega^{i+t(p-1)})^{p+1} = (p+1)\sum_{i=0}^{p-2}(\omega^{i})^{p+1} =(p+1)\sum_{i=0}^{p-2}\alpha^{i}=0.
\end{equation}

For any two rows $\vv=(v_1,\ldots,v_n)$ and $\vw=(w_1,\ldots,w_n)$ in  a $\gh$ matrix $H(p^2,\lambda)$, where $n=p^2\lambda$, we have that $\sum_{i=1}^n (v_i - w_i)^{p+1} = \sum_{i=1}^n v^{p+1}_i = \sum_{i=1}^n w^{p+1}_i =\lambda \sum_{j=0}^{p^2-2} (\omega^j)^{p+1} =0$ by (\ref{suma}).
Then we have
\begin{multline}
\sum_{i=1}^n (v_i - w_i)^{p+1} = \sum_{i=1}^n (v^p_i-w^p_i)(v_i-w_i)= \\
\sum_{i=1}^n (v^{p+1}_i +w^{p+1}_i -(v^p_iw_i+w^p_iv_i))= - \sum_{i=1}^n (v^p_iw_i+w^p_iv_i)=0\\
\end{multline}
Therefore, if $p$ is even, then  $\langle \vv,\vw \rangle =0$. Otherwise, $0=-\beta \sum_{i=1}^n (v^p_iw_i+w^p_iv_i)= \sum_{i=1}^n ((\beta v_i)^p w_i- w^p_i (\beta v_i))=\langle \beta \vv,\vw \rangle =\beta \langle \vv,\vw \rangle$ and we also have that $\langle \vv,\vw \rangle =0$. Note that since $\beta=\omega^{(p+1)/2}$ when $p$ is odd, $\beta^p=-\beta$.
\end{proof}

 \begin{lemma}\label{solambda2}
 Let $\vv,\vw$ be  two different rows of a $\gh$ matrix $H(4,\lambda)$.  Then $\langle \vv,\vw \rangle  \equiv \lambda \pmod{2}.$  If $\vv = \vw$ then $\langle \vv,\vw \rangle \equiv 0 \pmod{2}.$
\end{lemma}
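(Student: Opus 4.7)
The plan is to adapt the characteristic-polynomial argument from the proof of Lemma~\ref{solambda} to the case $p=2$. In characteristic two, the additive inner product reads $\langle \vv,\vw\rangle = \sum_i (v_i w_i^2 + v_i^2 w_i)$; each summand $v w^2 + v^2 w$ is fixed under squaring (using $x^4=x$ for $x \in \F_4$), so the whole sum lies in $\F_2\subset\F_4$ and it makes sense to speak of it modulo~$2$. The key algebraic identity I plan to exploit is the Frobenius-style expansion $(v_i+w_i)^3 = v_i^3 + v_i^2 w_i + v_i w_i^2 + w_i^3$, valid in characteristic~2 because the middle binomial coefficients $\binom{3}{1}$ and $\binom{3}{2}$ are odd.

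Summing this identity over $i = 1,\dots,n = 4\lambda$ separates $\langle\vv,\vw\rangle$ from the pure-cube sums:
\begin{equation*}
\sum_i (v_i+w_i)^3 \;=\; \sum_i v_i^3 \;+\; \sum_i w_i^3 \;+\; \langle \vv,\vw\rangle.
\end{equation*}
I then evaluate each cube sum via the $\gh$ defining property. Since $\vv\neq\vw$, the multiset $\{v_i-w_i\}=\{v_i+w_i\}$ contains every element of $\F_4$ exactly $\lambda$ times; and after normalising $H(4,\lambda)$ so that its first row is zero, the same balance holds for any other individual row, by differencing against the zero row. A direct computation in $\F_4$ gives $\sum_{x\in\F_4} x^3 = 0+1+1+1 = 1$ in $\F_2$, since every nonzero element of $\F_4$ cubes to~$1$ (the multiplicative group has order~$3$). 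Consequently each of the three cube sums equals $\lambda\cdot 1 \equiv \lambda \pmod 2$ in $\F_2$, and the displayed identity collapses to $\lambda \equiv \lambda + \lambda + \langle \vv,\vw\rangle \pmod 2$, yielding $\langle \vv,\vw\rangle \equiv \lambda \pmod 2$.

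For the case $\vv=\vw$ the summand $v_i v_i^2 - v_i^2 v_i$ vanishes termwise in any characteristic, so $\langle \vv,\vv\rangle = 0$ outright. The main delicacy I anticipate is justifying the individual-row balance for $\vv$ and $\vw$: it is not automatic from the bare $\gh$ definition but follows from normalisation together with the implicit assumption that the chosen rows differ from the zero reference row. With those conventions in place the argument above closes, and this matches the subsequent use of the lemma in the quantum-code section, where the relevant conclusion is self-orthogonality of $F_H$ for $\lambda$ even.
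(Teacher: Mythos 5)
Your proof is correct and follows essentially the same route as the paper's: expand $\sum_i (v_i+w_i)^3$, use that every nonzero cube in $\F_4$ equals $1$, and evaluate the resulting cube sums via the balance property of the rows (your $\lambda\sum_{x\in\F_4}x^3\equiv\lambda$ is exactly the paper's $\wt(\vv)\equiv n-\lambda\pmod 2$). The only substantive difference is that you explicitly flag, correctly, that individual-row balance requires the matrix to be normalized and the rows to be nonzero --- a hypothesis the paper's proof invokes silently via $\wt(\vv)=\wt(\vw)=n-\lambda$ --- and your termwise treatment of the $\vv=\vw$ case is a slight simplification.
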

\begin{proof}
We note that for a vector $\vv=(v_1,\ldots,v_n)$, $\sum_{i=1}^n v_i^3$ is the weight of $\vv$ since each nonzero element cubed is 1. Let $\vv \neq \vw$.
Note also that $\wt(\vv) = \wt(\vw) = \wt(\vv + \vw) = n- \lambda$ by the properties of a $\gh$ matrix, where $n=4\lambda$ is the order.
Then we have
\begin{eqnarray*}
\sum_{i=1}^n (v_i + w_i)^3 &=& \sum_{i=1}^n v_i^3 + 3 \sum_{i=1}^n v_i^2w_i + 3 \sum_{i=1}^n v_i w_i^2 + \sum_{i=1}^n w_i^3 \\
\wt(\vv +\vw) &=& \wt(\vv) + 3\langle \vv,\vw \rangle + \wt(\vw).
\end{eqnarray*}
Hence,  $\langle \vv,\vw \rangle =n-\lambda =3\lambda$ and so $\langle \vv,\vw \rangle  \equiv \lambda \pmod{2}$.

If $\vv = \vw$, then $\wt(\vv+\vw) =0$ and by the same computation as before we have $\langle \vv,\vw \rangle \equiv 0 \pmod{2}.$
\end{proof}

\begin{proposition}
Let $H(p^2,\lambda)$ be a $\gh$ matrix, where $\lambda$ is even when $p=2$. Then $F_H \subset F_H^\perp$ is a  self-orthogonal additive code.
\end{proposition}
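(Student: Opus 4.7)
The plan is to bundle the two preceding lemmas together and invoke $\F_p$\nobreakdash-bilinearity of the trace Hermitian form. Since $F_H^\perp$ is $\F_p$\nobreakdash-linear by definition, to prove $F_H \subseteq F_H^\perp$ (and, taking $\F_p$\nobreakdash-spans, that the $\F_p$\nobreakdash-additive code generated by the rows is self-orthogonal) it is enough to verify $\langle \vv, \vw \rangle = 0$ for every ordered pair of rows of $H$, including the diagonal pair $\vv = \vw$, and then extend by bi-additivity.

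I would split on the parity of $p$. For $p$ odd, \Cref{solambda} already gives $\langle \vv, \vw \rangle = 0$ for any two rows; the proof of that lemma rests on the identity $\sum_i (v_i - w_i)^{p+1} = 0$, which is trivially true when $\vv = \vw$, so the diagonal case is absorbed at no cost. For $p = 2$, \Cref{solambda2} gives $\langle \vv, \vw \rangle \equiv \lambda \pmod 2$ for distinct rows; the hypothesis that $\lambda$ is even makes this $0$ in $\F_2$, and the second clause of the same lemma covers $\langle \vv, \vv \rangle$.

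To promote pairwise orthogonality of the rows to self-orthogonality of the $\F_p$\nobreakdash-additive code they generate, I would note that $\langle \cdot, \cdot \rangle$ is $\F_p$\nobreakdash-bilinear: the Frobenius identity $(a+b)^p = a^p + b^p$ in characteristic $p$ gives bi-additivity of each summand $v_i w_i^p - v_i^p w_i$, and $\lambda^p = \lambda$ for $\lambda \in \F_p$ yields the scalar linearity (absorbing the fixed factor $\beta$ for $p$ odd). Therefore orthogonality of a generating set extends to orthogonality of its entire $\F_p$\nobreakdash-span, which gives $F_H \subseteq F_H^\perp$.

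There is no substantive obstacle, since \Cref{solambda} and \Cref{solambda2} have already carried out the actual calculation; this proposition is essentially the packaging step. The only small point to keep in mind is the diagonal case $\vv = \vw$, which is handled transparently for odd $p$ by the proof of \Cref{solambda} and explicitly by the second clause of \Cref{solambda2} for $p=2$.
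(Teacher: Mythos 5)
Your proposal is correct and follows essentially the same route as the paper: both reduce the claim to the pairwise orthogonality of rows established in \Cref{solambda} and \Cref{solambda2}, using the evenness of $\lambda$ when $p=2$. You are merely more explicit than the paper about the diagonal case and about extending row-orthogonality to the $\F_p$-span via bi-additivity, which the paper leaves implicit.
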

\begin{proof}
By Lemmas~\ref{solambda} and \ref{solambda2}, we have that any two vectors are orthogonal with respect to the additive inner product.  
\end{proof}

A $q$-ary quantum code of length $n$ and size $K$ is a $K$-dimensional subspace of a $q^n$-dimensional Hilbert space,
and is denoted by $[[n, k, d]]_q$, where $k=\log_q K$ and $d$ the minimum distance.

\begin{theorem}
Let $H(p^2,\lambda)$ be a $\gh$ matrix of order $n=p^2\lambda=p^t$ over $\F_{p^2}$, where $\lambda$ is even when $p=2$. Then $C_H$ is and $\F_p$-additive self-orthogonal code, containing $p^{t+2}$ codewords, 
such that there are no vectors of weight less than 3 in $C_H^\perp \backslash C$. Then, we have a pure additive quantum-error-correcting code with parameters $[[p^t, p^t-(t+2)/2, 3]]_{p^2}$.
\end{theorem}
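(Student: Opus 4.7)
The plan is to verify four ingredients: (a) $C_H\subseteq C_H^\perp$ with respect to the additive inner product, (b) $|C_H|=p^{t+2}$, (c) no nonzero vector of Hamming weight less than $3$ lies in $C_H^\perp\setminus C_H$, and (d) these classical facts assemble via the standard additive quantum-code construction into the claimed pure quantum code. For (a) the preceding proposition already yields $F_H\subseteq F_H^\perp$, so the task reduces to checking that each $\alpha\mathbf{1}$ is additive-orthogonal to every row of $F_H$ and to every $\beta\mathbf{1}$. For a row $\vw$ of the normalised matrix other than the all-zero row, the $\gh$ property ensures that each element of $\F_{p^2}$ appears exactly $\lambda$ times among the $w_i$, so $\sum_i w_i=\sum_i w_i^p=0$ and hence $\langle\alpha\mathbf{1},\vw\rangle=0$; moreover $\langle\alpha\mathbf{1},\beta\mathbf{1}\rangle$ equals a scalar multiple of $n=p^t$, which vanishes in characteristic $p$. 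For (b), normalisation makes the first column of $F_H$ identically zero, so the cosets $F_H+\alpha\mathbf{1}$ are separated by their first coordinate and are therefore pairwise disjoint, giving $|C_H|=p^2\cdot p^t=p^{t+2}$.

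For (c) I first observe that the nonzero minimum Hamming distance of $C_H$ is at least $n-\lambda=p^t-p^{t-2}>2$, so weight-$1$ and weight-$2$ vectors cannot lie in $C_H$; it therefore suffices to rule them out of $C_H^\perp$. For a weight-$1$ vector $a\ve_i$ with $a\neq 0$, the $i$-th coordinate of codewords of $C_H$ ranges over all of $\F_{p^2}$ (by the $\gh$ property of the columns when $i>1$, and via the cosets $\alpha\mathbf{1}$ when $i=1$); the condition $ac^p-a^pc=0$ for every $c\in\F_{p^2}$, evaluated successively at $c=1$ and at a primitive $c=\omega$, forces $a\in\F_p$ and then $a=0$. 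For a weight-$2$ vector $a\ve_i+b\ve_j$ with $a,b\neq 0$ and $i\neq j$, I would first test against $c=\gamma\mathbf{1}$, obtaining $(a+b)\gamma^p-(a+b)^p\gamma=0$ for every $\gamma\in\F_{p^2}$; the same two-substitution trick forces $a+b=0$, leaving only the antipodal subcase $b=-a$. In that subcase I would test against rows of $H$: since the transpose of $H$ is also a $\gh$ matrix, the difference $\delta=c_i-c_j$ ranges over all of $\F_{p^2}$ as $c$ varies over the rows, and the inner product reduces to $a\delta^p-a^p\delta$, so the same two-substitution argument yields $a=0$, a contradiction.

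Finally, for (d), once $C_H$ is established as an $\F_p$-additive self-orthogonal code in $\F_{p^2}^n$ of size $p^{t+2}$ whose additive dual has minimum weight at least $3$ outside $C_H$ itself, the standard additive quantum-code construction recalled in the section preamble returns a pure quantum code with the stated parameters. The main obstacle I anticipate is the weight-$2$ case of (c), in particular the antipodal subcase $b=-a$ which forces one to move from scalar multiples of $\mathbf{1}$ to genuine rows of $H$ and to exploit the transpose-$\gh$ property of columns; the remaining steps are either elementary characteristic-$p$ arithmetic, direct coset counting, or a citation of the standard construction.
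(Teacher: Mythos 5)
Your proposal is correct, and in fact it supplies more than the paper does: the paper states this theorem with no proof at all, relying implicitly on the preceding proposition ($F_H\subseteq F_H^\perp$) and on the standard additive-code-to-quantum-code translation cited in the section preamble. Your four ingredients are exactly the ones that need checking, and each is handled soundly: the extension from $F_H$ to $C_H$ uses the $\F_p$-bilinearity of the trace Hermitian form together with $\sum_i w_i=\sum_i w_i^p=0$ for nonzero rows of a normalized $\gh$ matrix; the count $|C_H|=p^{t+2}$ follows from the zero first column separating the cosets $F_H+\alpha\one$; and the minimum nonzero weight $n-\lambda=p^{t-2}(p^2-1)\ge 3$ correctly reduces part (c) to excluding weight-$1$ and weight-$2$ vectors from $C_H^\perp$ outright (which also gives purity). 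The one genuinely nontrivial step is the antipodal subcase $b=-a$ of the weight-$2$ analysis, where testing against multiples of $\one$ gives nothing; your move to rows of $H$ and the observation that $\delta=c_i-c_j$ sweeps out all of $\F_{p^2}$ because $H^T$ is again a $\gh$ matrix (a fact the paper records in its introduction) is exactly the right tool, and the two-substitution trick $c=1$, $c=\omega$ correctly forces $a=0$ in characteristic $p$ via the Frobenius identity $(c_i-c_j)^p=c_i^p-c_j^p$. The only step left at citation level is (d), matching the paper's own treatment; note only that the paper's dimension convention $k=n-\log_{p^2}|C_H|=p^t-(t+2)/2$ measures the quantum dimension in $p^2$-ary units, which is what makes your size count $|C_H|=p^{t+2}$ produce the stated parameters.
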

%\begin{proof}

%\end{proof}

It is worth mentioning that all the results in this section are also true for any integer $p>1$, not necessarily a prime number. Moreover, the invariants rank and dimension of the kernel, used in this paper for $\F_p$-additive $\gh$ codes, could be used in general to classify $\F_p$-additive and quantum codes.

%\section{Conclusions}
%\label{sec:conclusion}

\end{document}